\documentclass[conference]{IEEEtran}

\usepackage{etex} 
\reserveinserts{28}

\usepackage{latexsym}
\usepackage{paralist}
\usepackage{microtype}

\usepackage{suffix}

\usepackage{xpunctuate}
\usepackage[all]{foreign} 
\usepackage{xfrac} 
\usepackage[style=american,english=american]{csquotes}

\usepackage{amsmath, amsthm, amssymb, amsfonts} 
\usepackage{thmtools, thm-restate}
\usepackage{mathtools}

\usepackage[full,small]{complexity}

\usepackage{xargs}

\PassOptionsToPackage{dvipsnames,usenames,table}{xcolor}

\usepackage{booktabs} 
\usepackage{float}
\usepackage{wrapfig}
\usepackage{floatflt}
\usepackage{framed}
\usepackage{ctable}
\usepackage{dcolumn}
\usepackage{adjustbox}

\usepackage{enumerate}
\usepackage{enumitem}
\usepackage{xspace}
\usepackage{xifthen}
\usepackage{fixltx2e}
\usepackage{url}

\usepackage[longend,vlined]{algorithm2e}


\makeatletter
\newcommand{\raisemath}[1]{\mathpalette{\raisem@th{#1}}}
\newcommand{\raisem@th}[3]{\raisebox{#1}{$#2#3$}}
\makeatother

\def\N{\mathbb{N}} 
\def\R{\mathbb{R}}	

\def\Q{\mathbb{Q}}	




\def\dir#1{\vec{#1}}


\renewcommand{\leq}{\leqslant}

\renewcommand{\geq}{\geqslant}

\renewcommand{\epsilon}{\varepsilon}

\renewcommand{\P}{\operatorname{\mathbb P}}
\renewcommand{\E}{\operatorname{\mathbb E}}

\usepackage{stmaryrd}



%
%
%
%
%





\theoremstyle{plain}
\newtheorem{lemma}{Lemma}

\newtheorem{theorem}{Theorem}
\newtheorem{corollary}{Corollary}

\theoremstyle{definition}
\newtheorem{definition}{Definition}




%

\newcommand*\varrule[1][0.4pt]{\leavevmode\leaders\hrule height#1\hfill\kern0pt}


\setlist[1]{labelindent=\parindent,leftmargin=*} 
\setlist{itemsep=0pt}
\setitemize[1]{label={\small\textbullet}}

\setcounter{topnumber}{2}
\setcounter{bottomnumber}{2}
\setcounter{totalnumber}{4}     
\setcounter{dbltopnumber}{2}    

\usepackage{hyphenat}
\hyphenation{tree-depth tree-width clique-width rank-width shrub-depth tree-length
             path-width boolean-width mo-du-lar-width analy-sis white-space pro-gramm-ing
             al-go-rith-mic at-tempts bound-ed like-li-hood pro-vides
             Jahr-zehnt More-over con-jec-tured mi-nors}


\def\Erdos{Erd\H{o}s\xspace}
\def\Renyi{R\'{e}nyi\xspace}

\def\Renyi{R\'{e}nyi\xspace}

\usepackage{subfig}

\newcommand{\SKG}{\ensuremath{\mathrm{SKG}}\xspace}
\newcommand{\SKGbf}{\ensuremath{\mathrm{\textbf{SKG}}}\xspace}
\newcommand{\rmatkerned}{\mathrm{R\text{-}M\kern -.1em AT}}
\newcommand{\rmatkernedbf}{\mathrm{\textbf{R\text{-}M\kern -.1em AT}}}
\newcommand{\RMAT}{\ensuremath{\rmatkerned}\xspace}
\newcommand{\RMATbf}{\ensuremath{\rmatkernedbf}\xspace}
\newcommand{\RMATErasure}{\ensuremath{\rmatkerned^\ominus}\xspace}
\newcommand{\RMATRethrow}{\ensuremath{\rmatkerned^\oplus}\xspace}
\newcommand{\RMATFlip}{\ensuremath{\rmatkerned^{\$}}\xspace}

\def\twobytwo(#1,#2,#3,#4){\ensuremath{\bigl( \begin{smallmatrix}
    #1  &  #2 \\%
    #3  &  #4%
  \end{smallmatrix} \bigr)}%
}

\newtheorem*{ChernoffHoeffding}{Chernoff--Hoeffding}

\begin{document}

\title{Asymptotic Analysis of Equivalences and Core-Structures in Kronecker-Style Graph Models}

\author{Anonymous}

\author{
\IEEEauthorblockN{
Alex J.~Chin \IEEEauthorrefmark{1},
Timothy D.~Goodrich \IEEEauthorrefmark{2},
Michael P.~O'Brien \IEEEauthorrefmark{2},\\
Felix Reidl \IEEEauthorrefmark{2},
Blair D.~Sullivan \IEEEauthorrefmark{2},
and Andrew van~der~Poel \IEEEauthorrefmark{2}}

\IEEEauthorblockA{\IEEEauthorrefmark{1}Department of Statistics,
Stanford University, Stanford, California 94305}

\IEEEauthorblockA{\IEEEauthorrefmark{2}Department of Computer Science, 
North Carolina State University, Raleigh, North Carolina 27606\\
\{tdgoodri, mpobrie3, fjreidl, blair\_sullivan, ajvande4\}@ncsu.edu,~ajchin@stanford.edu}
}

\setlength{\abovedisplayskip}{6pt}
\setlength{\belowdisplayskip}{5pt}
\setlength{\abovedisplayshortskip}{2pt}
\setlength{\belowdisplayshortskip}{2pt} 
\setlength{\floatsep}{5pt plus 2pt minus 2pt}
\setlength{\textfloatsep}{5pt plus 2pt minus 2pt}
\setlength{\intextsep}{5pt plus 2pt minus 2pt}

\newcommand{\tight}{%
    \setlength{\abovedisplayskip}{0pt}%
    \setlength{\belowdisplayskip}{2pt}%
    \setlength{\abovedisplayshortskip}{0pt}%
    \setlength{\belowdisplayshortskip}{2pt}%
}
\renewcommand{\baselinestretch}{.95} 

\maketitle

\begin{abstract}
Growing interest in modeling large, complex networks has spurred significant
research into generative graph models. These models not only allow users to
gain insight into the processes underlying networks, but also provide
synthetic data which allows algorithm scalability testing and addresses
privacy concerns. Kronecker-style models (\eg~SKG and R-MAT) are often used
due to their scalability and
ability to mimic key properties of real-world networks (\eg~diameter and
degree distribution). Although a few papers theoretically establish these
models' behavior for specific parameters, many claims used to justify the use
of these models in various applications are supported only by empirical
evaluations.  In this work, we prove several results using asymptotic analysis
which illustrate that empirical studies may not fully capture the true
behavior of the models.\looseness-1

Paramount to the widespread adoption of Kronecker-style models was the
introduction of a linear-time edge-sampling variant (R-MAT), which existing
literature typically treats as interchangeable with SKG. We prove that
although several R-MAT formulations are asymptotically equivalent, their
behavior diverges from that of SKG. Further, we show these results are
experimentally observable even at relatively small graph sizes. Second, we
consider a case where asymptotic analysis reveals unexpected behavior within a
given model. One of the criticisms of using Kronecker-style models has been
that they are unable to generate the deep core-structures commonly observed in
real-world data. We prove that in fact, for some parameter values, all the
Kronecker-style models generate graphs whose maximum core depth grows as a
function of the size of the network---including in the region of the
parameter space most commonly used in prior work. Our results also illustrate
why this behavior may be difficult to observe for moderate graph sizes, and
highlight the dangers of extrapolating model-wide claims from empirical
results.\looseness-1

\end{abstract}

%
\section{Introduction}
\noindent
The rapidly increasing availability of large relational data sets has brought
network science to the forefront of a diverse set of fields like business,
social sciences, natural sciences, and engineering. Due to privacy
restrictions and the desire to have testing data at larger scales, generating
synthetic data from random graph models to evaluate new algorithms or
techniques has become a common practice. A significant amount of research has
focused on creating generative models that produce networks whose properties
mimic those of real data sets. One popular family of such models, which we refer to 
as \emph{Kronecker-style models}, is based on using a small ``seed'' or initiator matrix to generate 
a fractal structure of edge probabilities. This family includes stochastic 
Kronecker graphs (\SKG)~\cite{leskovec2005realistic, leskovec2010kronecker}, the Recursive-MATrix (\RMAT) generator
~\cite{chakrabarti2004rmat}, and several variants of each~\cite{moreno2010tied, seshadhri2011depth}.

The widespread adoption~\cite{todorovic2012human, schmidt2009scalable, 
hill2009social, bader2008graph, sasaki2008security} of Kronecker-style models has been 
motivated by empirical evidence showing the 
generated networks replicate several important properties of real world networks, 
including degree and eigenvalue distributions, diameter, and 
density~\cite{leskovec2005realistic, leskovec2010kronecker}. Further, the initiator 
matrix parameters can be learned from a real-world network using the algorithm 
\textsc{KronFit}~\cite{leskovec2007scalable}; empirical evaluation shows that using the fitted parameters 
on a dozen datasets, synthetic \SKG graphs mimic real-world 
degree distributions and small diameters. To complement the work measuring 
properties of generated data, a number of papers have proven explicit expressions 
for computing the expected value of some graph 
invariant (e.g. degree distribution~\cite{groer2011mathematical} and number of 
isolated vertices~\cite{seshadhri2011depth}). Finally, a few papers have considered 
the limiting behavior of these models---characterizing the emergence of a giant
component and proving constant diameter~\cite{mahdian2007stochastic}, and proving that 
\SKG generally cannot generate graphs with a power-law degree distribution~\cite{kang2014properties}. 

Here we show that \emph{asymptotic analysis} of Kronecker-style models 
not only offers formal guarantees on limiting behavior, but 
practically-relevant restrictions on their usage. Specifically, we focus on two properties of these models: 
(1) equivalence/inequivalence among variants and (2) their core-periphery structure, as measured by degeneracy. 

Our first result addresses the common practice of using distinct variations of Kronecker-style
models interchangeably, despite the lack of formal proofs of equivalence in the literature. 
A recent paper of Moreno et al.~\cite{moreno2014scalable}
challenged these assumptions and proved that without careful consideration,
two Kronecker-style models will \emph{not} necessarily sample from the same
statistical distribution given analogous input parameters. We show in 
Section~\ref{sec:model-equiv} that in the limit, several widely-used 
variants of the \RMAT models are indeed equivalent 
(Theorem~\ref{theorem:erasure-rethrow-equiv}). However, we also prove
that their edge probabilities diverge from those of \SKG, and show these 
differences are experimentally observable even at relatively small graph sizes.

Our second contribution provides insight into the core-periphery structure of 
graphs generated by these models, as measured by the {degeneracy}.  
Low degeneracy means the graph has no region (subgraph) that is ``too dense'', 
and is an observed property of many real-world networks. Empirical
studies~\cite{seshadhri2011depth} have suggested that the degeneracy of
Kronecker-style models cannot grow large without increasing the number of
isolated vertices. Our proofs in Section~\ref{sec:degeneracy} disprove this, showing 
that for a fixed average degree, these models produce graphs whose degeneracy
grows asymptotically with the number of vertices \emph{irrespective of the
number of isolated vertices}. However, our results also imply that 
this asymptotic behavior is slow to appear, preventing the occurrence of
deep cores even for graphs with hundreds of thousands of vertices (and causing 
misleading empirical evidence).

%

\section{Preliminaries}\label{sec:prelims}
\noindent
We assume that all graphs are simple (no parallel edges or self-loops) and
undirected unless otherwise specified. Directed graphs will be denoted by an
arrow (e.g. $\dir G$). Let~$\mathcal G_n$ denote the set of all~$n$-vertex
graphs. A \emph{random graph model} is a sequence of probability
measures~$(\P_n)_{n \in \N}$ over the space~$(\mathcal G_n, 2^{\mathcal
G_n})$. For simplicity, we use~$\P$ as the probability measure with the
understanding that it refers to a concrete random graph model that will be
apparent from the context.

For convenience we consider $n$-vertex graphs whose vertices are numbered~$0$
to~$n-1$ and represented by binary bitstrings. This convention will allow us
to derive the (relative) probability of an edge in a Kronecker-style model
from the positions of ones in its endpoints. For bitstrings~$i,j$ of equal
length we use~$\#ab(ij)$ to denote the number of positions in which $a$ occurs
in $i$ when $b$ occurs in $j$, \ie 
\begin{align*}
\#ab(ij) = |\{x ~:~ i[x] = a ~\wedge~ j[x] = b\}| \text{ for } a,b \in \{0, 1\}.
\end{align*}

\subsection{Kronecker-style Models}

\noindent
We now define the Kronecker-style models, including a new formulation
\RMATFlip used in our analysis. For reference, Table \ref{table:model-def}
summarizes the notation defined below.\\

\begin{table}[!h]
\centering\small
\begin{tabular}{lp{5cm}p{4cm}}
  \toprule
  Symbol & Model Name\\
  \midrule
  \SKG & Stochastic Kronecker\\
  \RMATErasure & \RMAT with arc erasures \\
  \RMATRethrow & \RMAT with arc rethrows\\
  \RMATFlip & \RMAT simulated by coin-flips \\
  \bottomrule
\end{tabular}
\caption{Summary of Kronecker-style models.}\label{table:model-def}
\end{table}

\noindent \textbf{Stochastic Kronecker:} In 2005, Leskovec et al.
\cite{leskovec2005realistic} introduced the Stochastic Kronecker random graph
generator (\SKG) as a means of modeling real-world data. Taking an
\emph{initiator matrix} $M_1$ with values in the interval $[0, 1]$ (not
necessarily summing to $1$) and a natural number $k$, \SKG starts by
deterministically generating a probability matrix $M_k$, such that $M_k =
M_{k-1} \otimes M_1$, where $\otimes$ is the tensor (Kronecker) product. A
directed graph can then be sampled from $M_k$ by flipping one biased coin per
matrix entry to obtain an adjacency matrix. In keeping with prior work, we assume 
$M_1 = \twobytwo(a,b,c,d)$; such  a $2 \times 2$ initiator matrix has
been most widely adopted in the literature (including~\cite{seshadhri2011depth, groer2011mathematical, mahdian2007stochastic, kang2014properties}) after
experiments found it generates synthetic graphs that most closely match 
real-world data~\cite{leskovec2007scalable}.\\

\noindent \textbf{\RMATbf erasure and rethrow models:} Independent of \SKG, in
2004, Chakrabarti et al. \cite{chakrabarti2004rmat} introduced the Recursive
MATrix (\RMAT) graph generator. Similar to \SKG, \RMAT starts with an
initiator matrix $M$ and natural number $k$, with the restriction that $M =
\twobytwo(\alpha,\beta,\gamma,\delta)$ and $\alpha + \beta + \gamma + \delta =
1$; we will also assume (without loss of generality) that $\alpha \geq \beta,
\gamma, \delta$. A $2^k \times 2^k$ directed adjacency matrix is then
constructed by iteratively ``throwing'' $m$ arcs recursively into quadrants of
the adjacency matrix based on the probabilities from $M$; we call this method
the \emph{general R-MAT process}.

Formally, starting with a graph~$\dir G^0$ on~$2^k$ vertices with no arcs, at
each step~$1 \leq i \leq m$ we generate a random arc~$e_i$ by flipping two
biased coins~$C_1, C_2$ for $k$ rounds where $\P[C_1 = 1] = \gamma+\delta$
and~$\P[C_2 = 1] = \beta+\delta$. The head of~$e_i$ is the bitstring formed 
by concatenating the results of~$C_2$, and the tail is obtained using $C_1$. 
We then either add $e_i$ to the graph or rethrow the edge (detailed
below) to obtain $\dir G^i$.

The probability of an arc being selected in a single step is a function of the bitstrings of
its endpoints. For vertices~$u,v$ we define the \emph{weight} of the (potential)
arc~$uv$ as
\begin{align}
  \omega_{uv} = \alpha^{\#00(uv)} \beta^{\#01(uv)} \gamma^{\#10(uv)} \delta^{\#11(uv)}.
  \label{eqn:omega}
\end{align}
Note that the probability of an arc existing in $\dir G^m$ is not necessarily
its weight; rather, the probability can be computed from its weight and proper
model-dependent scaling.

Given that we are generating simple graphs and a thrown arc may land in an
occupied cell, we now define two existing implementations of the general \RMAT
process. In the erasure model (denoted by \RMATErasure), the repeated arc is ignored, 
resulting in a generated graph with strictly less arcs than the number thrown. 
In the rethrow model (denoted by \RMATRethrow), this arc is ``rethrown'' 
(flipping another $k$ pairs of coins) until it lands in an unoccupied cell.
These two models are not strictly identical, since the probability
distribution across unoccupied cells changes with each added arc in the
rethrow model.
\RMATErasure is equivalent to the original formulation of the \RMAT
model~\cite{chakrabarti2004rmat}, and \RMATRethrow is consistent with the
description of the \RMAT model in~\cite{leskovec2010kronecker}.\\

\noindent \textbf{Converting parameters between \SKGbf and \RMATbf:}\\
Historically, \RMAT has been treated as an $O(m)$ run time drop-in
replacement for \SKG (\eg in~\cite{leskovec2010kronecker, seshadhri2011depth,
pinar2011similarity}), but the details of converting parameters between models
require some care. Let $\twobytwo(a,b,c,d)$ be a \SKG initiator matrix, then
each arc~$uv$ is added independently at random with probability
\[
   a^{\#00(uv)} b^{\#01(uv)} c^{\#10(uv)} d^{\#11(uv)}
\]
and the expected number of arcs in the final graph is
\[
  m = (a+b+c+d)^k.
\]
This formulation suggests the following translation between the \RMAT and \SKG
parameters. Suppose we want a graph with $n = 2^k$ vertices and $m$ arcs. Let
$\mu = m / 2^k$ (i.e. the edge density), then we introduce a scaling
parameter~$\theta$ where $\twobytwo(a,b,c,d) =
\theta\twobytwo(\alpha,\beta,\gamma,\delta)$. To compute~$\theta$ we match up
the expected number of edges of the models:
\begin{align*}
  \mu \cdot 2^k = (a+b+c+d)^k = \theta^k(\alpha+\beta+\gamma+\delta)^k = \theta^k.
\end{align*}
Therefore our scaling parameter is~$\theta = 2\mu^{1/k}$.
Table~\ref{tab:rmat-converted} contains several \RMAT 
initiator matrices that were derived from \SKG initiators fitted to real-world
networks~\cite{leskovec2010kronecker} using this conversion.

Note that in order to satisfy
\SKG's constraint that $a \leq 1$ (and given that $\alpha > \beta, \gamma,
\delta$) we require that
\[
  \theta \alpha = \mu^{1/k} 2\alpha \leq 1 \implies \alpha \leq \frac{1}{2} \mu^{-1/k}.
\]
This latter term converges to~$1/2$ when~$k \to \infty$ and~$\mu$ is a 
constant independent of~$n$. Since it is generally accepted that real-world
networks are sparse, we will restrict ourselves to constant~$\mu$
in the rest of this paper. In conclusion,
the translation from \RMAT to \SKG parameters is possible
whenever~$\alpha < 1/2$, $\mu$ is a constant, and $k$ is large enough. 
This restriction on~$\alpha$ has another interpretation: for~$\alpha \geq 1/2$, 
\SKG~generates in expectation a sublinear number of arcs. Accordingly, in this regime \SKG~cannot possibly
match up with the \RMAT model.  \\

\noindent \textbf{A new \RMATbf model:} In addition to the parameter space
limitations, the arc generation process differs significantly between \SKG and
\RMAT. Specifically, arcs occur in \SKG independently while the existing $i-1$
arcs influence the placement of the $i^\text{th}$ arc in \RMATRethrow and
\RMATErasure. To study whether this difference in mechanics results in
dissimilar models, we introduce a new \emph{coin-flipping} model, \RMATFlip.
This model is not intended for practical usage since sampling from it requires
$O(n^2)$ iterations, but it is useful for mathematical analysis and to bridge
the gap between the previous \RMAT models and \SKG.

First, note that the probability of an arc~$uv$ occurring $t$ times in the \RMAT process follows the binomial law
\begin{align*}
  \P[|\{e_i \colon e_i = uv\}| = t]
    &= {m \choose t} (\omega_{uv})^t (1-\omega_{uv})^{m-t}
\end{align*}
where $\omega_{uv}$ is defined in Equation \ref{eqn:omega}. Therefore the arc~$uv$ exists after~$m$ arcs have been thrown with probability
\begin{align*}
  \P[e \in \dir G^{m}] = 1 - (1-\omega_{uv})^{m}.
\end{align*}

\noindent Utilizing this fact, we define the \RMATFlip model:
\begin{definition}[\RMATFlip]
  Given an initiator matrix $M = \twobytwo(\alpha, \beta, \gamma, \delta)$ with
  $\alpha, \beta, \gamma, \delta \geq 0$ and $\alpha + \beta + \gamma + \delta =
  1$, a natural number $k$, and a positive real number $\mu$, \RMATFlip
  generates a graph with $2^k$ vertices by flipping every potential arc~$uv$
  independently at random with probability $1 - (1-\omega_{uv})^m$.
\end{definition}

\def\netw#1{\textsc{#1}}
\begin{table}
  \begin{tabular}{llll}
    \toprule
    Network & $n$  & $\mu$ & \RMAT initiator \\ \midrule
\netw{AS-NEWMAN} &  $22963$ & $4.22$ & $[.432, .269; .269, .009]$ \\
\netw{AS-ROUTEVIEWS} &  $6474$ & $4.09$ & $[.442, .255; .255, .022]$ \\
\netw{BIO-PROTEINS} &  $4626$ & $6.40$ & $[.364, .275; .275, .031]$ \\
\netw{CA-DBLP} &  $425957$ & $6.33$ & $[.453, .139; .139, .260]$ \\
\netw{CA-GR-QC} &  $5242$ & $5.53$ & $[.435, .107; .107, .301]$ \\
\netw{CA-HEP-PH} &  $12008$ & $19.74$ & $[.401, .175; .175, .194]$ \\
\netw{CA-HEP-TH} &  $9877$ & $5.26$ & $[.441, .120; .120, .259]$ \\
\netw{EMAIL-INSIDE} &  $986$ & $32.58$ & $[.352, .272; .272, .091]$ \\
\midrule
\netw{ANSWERS} &  $598314$ & $3.07$ & $[.469, .181; .195, .117]$ \\
\netw{ATP-GR-QC} &  $19177$ & $1.36$ & $[.441, .124; .108, .285]$ \\
\netw{BLOG-NAT05-6M} &  $31600$ & $8.59$ & $[.433, .246; .217, .096]$ \\
\netw{BLOG-NAT06ALL} &  $32443$ & $9.83$ & $[.429, .248; .222, .095]$ \\
\netw{CIT-HEP-PH} &  $30567$ & $11.41$ & $[.422, .186; .151, .223]$ \\
\netw{CIT-HEP-TH} &  $27770$ & $12.70$ & $[.417, .185; .146, .226]$ \\
\netw{DELICIOUS} &  $205282$ & $2.13$ & $[.479, .157; .167, .187]$ \\
\netw{EPINIONS} &  $75879$ & $6.71$ & $[.444, .237; .213, .057]$ \\
\netw{FLICKR} &  $584207$ & $6.09$ & $[.455, .216; .221, .066]$ \\
\netw{GNUTELLA-25} &  $22687$ & $2.41$ & $[.351, .233; .308, .086]$ \\
\netw{GNUTELLA-30} &  $36682$ & $2.41$ & $[.355, .231; .298, .084]$ \\
\netw{WEB-NOTREDAME} &  $325729$ & $4.60$ & $[.460, .190; .208, .105]$ \\
  \bottomrule
  \end{tabular}
  \caption{%
    Initiator matrices of the 20 networks fitted to the \SKG model in~\cite{leskovec2010kronecker}, converted
    to \RMAT initiator matrices. The upper half of the networks are undirected and hence have a symmetric initiator.
    We see that the densities are consistently small which supports the common assumption that~$\mu$ is a constant independent
    of the graph size.
  \label{tab:rmat-converted}}
\end{table}

\subsection{Analytical tools}

\noindent
We use the following tools for asymptotic analysis:\\

\noindent \textbf{Hamming slices:}
Since the bitstring representation of the vertices in Kronecker-style models
encodes information about the edges between them, it will be useful to group
the vertices by properties of their bitstrings. The \emph{Hamming weight} of a
vertex is the number of ones in its bitstring label. We define a \emph{Hamming
slice}~$\mathcal F_\ell$ to be the set of all vertices whose bitstrings have
Hamming weight exactly~$\ell$. We also denote $\mathcal F_{\leq \ell}$ and
$\mathcal F_{\geq \ell}$ as the set of vertices with bitstrings at most and at
least Hamming weight~$\ell$, respectively.\\

\noindent \textbf{Asymptotic equivalence:}
Two random graph models~$(\P_n),(\Q_n)$ are \emph{asymptotically equivalent} if for every
sequence of events~$(\mathcal E_n \mid \mathcal E_n \in 2^{\mathcal G_n} )_{n
\in \N}$ it holds that
\[
  \lim_{n \to \infty} \P_n[\mathcal E_n] - \Q_n[\mathcal E_n] = 0.
\]

\noindent \textbf{Concentration inequalities:}
To show that the invariants of our graphs do not deviate far from their expected values, we use \emph{Chernoff--Hoeffding bounds}.
\begin{ChernoffHoeffding}[\cite{hoeffding1963probability}]
  Let~$\{X_i\}_{i \in [n]}$ be random binary variables with associated success probabilities
  $\{p_i\}_{i \in [n]}$. Let further~$\xi = \E[ \sum_i X_i ]$. Then for every~$\delta > 0$ it holds that
  \[
    \Pr\left[\sum_i X_i \geq (1+\delta) \xi \right] \leq \left( \frac{e^\delta}{(1+\delta)^{1+\delta}} \right)^\xi.
  \]
\end{ChernoffHoeffding}

\noindent
A common situation will be that~$\xi$ tends towards zero as~$n$ increases.
Choosing some constant~$c$ for which we want to obtain a bound, we let~$c =
(1+\delta) \xi$ from which we derive that $\delta \xi = c -\xi$ and~$1+\delta
= c/\xi$. We thus use the following reformulation of this bound:
\begin{align*}
  \Pr\left[\sum_i X_i \geq c \right] \leq \frac{e^{\delta\xi}}{(1+\delta)^{(1+\delta)\xi}}
        = \frac{(e\xi)^c}{e^\xi c^c } \leq \left( \frac{e \xi}{c} \right)^c.
\end{align*}

\noindent \textbf{Binomial Coefficients}: We also use the following bound on
binomial coefficients based on Stirling's approximation:
\begin{align*}
  \frac{\sqrt \pi}{2} \Gamma(\tau) \leq {n \choose \tau n} \leq \Gamma(\tau)
  ~\text{where}~ \Gamma(\tau) = \frac{2^{nH(\tau)}}{\sqrt{2\pi n \tau(1{-}\tau)}},
  \label{eqn:binomial_bound}
\end{align*}
and~$H(x) = -\log_2(x^x(1-x)^{1-x})$ is the \emph{binary entropy function}.\\

\noindent \textbf{Degeneracy, cores, and dense subgraphs:}
Recent work on community structure in complex networks has pointed 
to some sort of ``core-periphery'' structure in many real
networks (\eg~\cite{leskovec2009community, adcock2013, rombach2014}), often 
exemplified using the $k${\emph-core decomposition}, a popular 
tool in visualization and social network analysis (see \eg~\cite{giatsidis2011evaluating, alvarez2005large, kitsak2010identification, carmi2007model, alvarez2008k-coredecomposition}). 

The \emph{$k$-core} of a graph is the
maximal induced subgraph in which all vertices have degree at least
$k$. The core-periphery structure of a network is often characterized in terms of
the depth of its core-decomposition (largest $k$ such that the $k$-core is non-empty), 
an invariant known as the \emph{degeneracy}. Both the degeneracy of a graph and its core
decomposition can be computed in $O(m)$ time using an algorithm by Batagelj and
Zaversnik \cite{batagelj2003m}. 
In Section~\ref{sec:degeneracy}, we analyze the degeneracy of Kronecker-style models, 
using the following equivalent definition where needed: 

\begin{definition}[Degeneracy]
  A graph $G$ is said to be \emph{$d$-degenerate} if every (induced) subgraph
  has a vertex of degree at most $d$. The smallest $d$ for which $G$ is
  $d$-degenerate is the \emph{degeneracy} of $G$.
\end{definition}

\noindent
In particular, all subgraphs of a $d$-degenerate graph are sparse; in the 
contrapositive this means that the existence of a subgraph of density~$> d$
implies that the host graph is \emph{not} $d$-degenerate. In
the asymptotic setting, a dense subgraph is a sequence of subgraphs whose edge
density diverges in the limit; the existence of such a substructure implies that
for every integer~$d$ the generated graphs past a certain threshold size 
are not $d$-degenerate. We will call graph models that contain dense subgraphs
\emph{asymptotically dense}. \looseness-1

\noindent

\subsection{Repeatability}

All experiments in this paper can be replicated with the code available at
\url{http://dl.dropboxusercontent.com/s/vfzvk72gpqbmhc9/asymptotic_kronecker.zip}. This
includes random graph model implementations, the random seeds used to
generate the data used in this manuscript, and code to calculate and plot the
relevant graph invariants. All code is written in Python; we recommend running
with \texttt{pypy} to reduce runtimes.

%
\section{Relationships of Kronecker-style Models}\label{sec:model-equiv}
We begin this section by proving that all of the aforementioned variants of \RMAT are equivalent asympotically. We then show that equivalent input parameters to \RMATFlip and \SKG do not generate equal probability distributions over the arcs. To further validate this proof, we provide an empirical result highlighting the difference between the models.

\begin{figure*}[!t]
  \centering
  \includegraphics[width=7in]{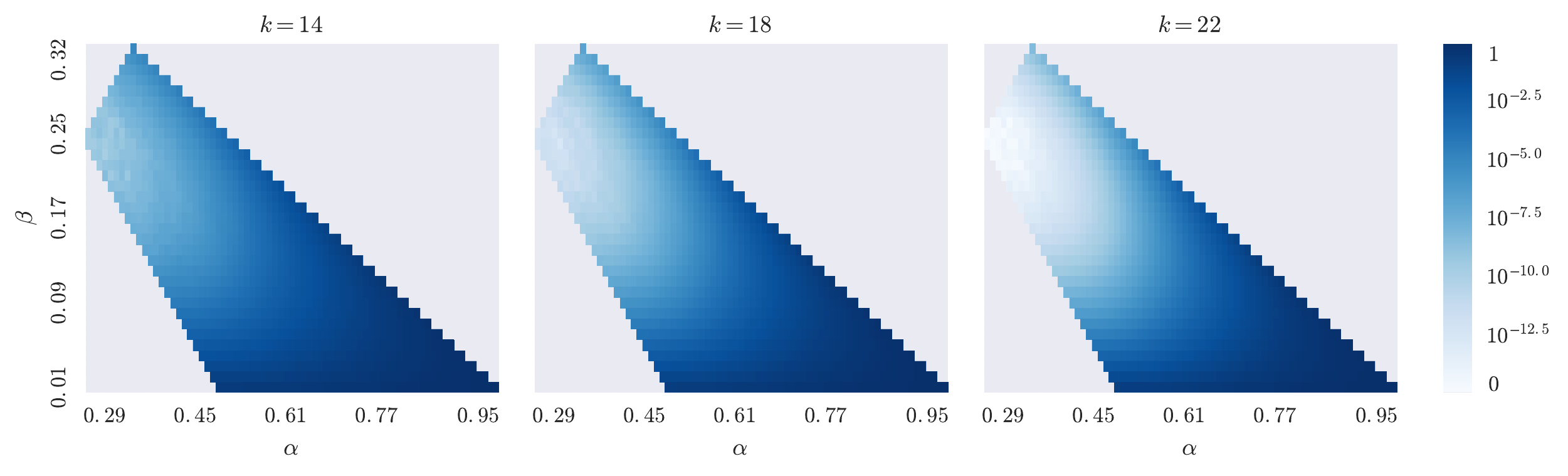}
  \caption{Proportion of arcs requiring a rethrow in \RMATRethrow as a function of $k$, $\alpha$, and $\beta$, averaged over ten graphs for each parameter value. All graphs were generated with $\mu = 4$ and $\gamma = \beta$, fixing $\delta = 1-\alpha-2\beta$.}\label{figure:collisions}
\end{figure*}

\subsection{Equivalence of \RMAT variants}
\noindent
In this section, we show the following equivalence between the three \RMAT variants:

\begin{theorem}\label{theorem:erasure-rethrow-equiv}
  For parameters $\twobytwo(\alpha, \beta, \gamma, \delta)$ where at least
  three entries are non-zero
  and~$\mu \leq \log(n)/2$ the models~$\RMATErasure$,  $\RMATRethrow$ and $\RMATFlip$
  are asymptotically equivalent for appropriate scalings of the parameter~$\mu$.
\end{theorem}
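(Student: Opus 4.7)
My plan is to prove the theorem by establishing three pairwise asymptotic equivalences; transitivity of total-variation distance then yields the full statement. The strategy uses $\RMATFlip$, whose explicit independent-Bernoulli structure assigns arc $uv$ existence probability $p_e = 1-(1-\omega_{uv})^m$, as a common reference, bridging to the other two models via Poissonization for $\RMATErasure$ and a direct coupling for $\RMATRethrow$. The ``appropriate scaling of $\mu$'' in the statement compensates for the fact that $\RMATRethrow$ produces exactly $m$ arcs while the other two models produce an expected $\sum_e p_e < m$ arcs, so the $\mu$-parameter in $\RMATErasure$ and $\RMATFlip$ must be slightly increased to match.

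For $\RMATErasure \sim \RMATFlip$, the single-arc marginals $1-(1-\omega_{uv})^m$ already coincide (in $\RMATErasure$ because the arc is present iff at least one of the $m$ independent throws lands on it), so the task is to handle the joint distribution, where $\RMATErasure$'s fixed-$m$ urn induces weak negative correlations. I would introduce an intermediate Poissonized variant with $\mathrm{Poisson}(m)$ throws, in which the hit counts at distinct cells become independent $\mathrm{Poisson}(m\omega_e)$ and the arc indicators become independent $\mathrm{Bernoulli}(1-e^{-m\omega_{uv}})$. Using the expansion $(1-\omega_{uv})^m = e^{-m\omega_{uv}}(1+O(m\omega_{uv}^2))$, the TV distance between the Poissonized variant and $\RMATFlip$ is bounded by $\sum_e m\omega_e^2 = \mu\,(2(\alpha^2+\beta^2+\gamma^2+\delta^2))^k$. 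Under the three-non-zero-entries assumption (combined with the implicit $\alpha \leq 1/2$ constraint inherited from the preceding $\SKG$/$\RMAT$ translation), one can show $\alpha^2+\beta^2+\gamma^2+\delta^2 < 1/2$ strictly, so this bound decays geometrically in $k$ and vanishes for $\mu \leq \log(n)/2$. A standard de-Poissonization estimate (conditioning on the total throw count equalling $m$) then bridges back to $\RMATErasure$.

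For $\RMATRethrow \sim \RMATErasure$, I would couple the two processes by running both on the same underlying stream of throws: they agree exactly on throws that land in unoccupied cells. When a collision occurs, $\RMATErasure$ discards the arc while $\RMATRethrow$ consumes additional throws until a new arc is placed. After rescaling $\mu$ to match expected edge counts, the resampled arcs in $\RMATRethrow$ are distributed as $\omega$ restricted away from the already-occupied set; since this set carries small total weight throughout the process, the conditional distribution is close to the unconditional $\omega$, and standard coupling bounds then yield vanishing TV distance.

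The principal obstacle is controlling the contribution of ``hot cells'' such as the $(0^k, 0^k)$ pair with weight $\alpha^k$ and expected hit count $\mu (2\alpha)^k$: these account for the bulk of the collisions and can absorb polylogarithmically many throws. The restrictions in the theorem---three non-zero entries and $\mu \leq \log(n)/2$---are precisely calibrated so that these hot-cell contributions still yield $o(1)$ total-variation discrepancy. Quantifying this requires combining the Chernoff--Hoeffding bound stated in the preliminaries with the binomial-coefficient estimates on Hamming slices to show that the mass on ``collision events'' concentrates tightly enough to be absorbed into the overall coupling error.
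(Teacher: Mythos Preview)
Your route diverges from the paper's, and the de-Poissonization step you rely on for $\RMATErasure \sim \RMATFlip$ does not go through. The paper's engine is a combinatorial ``occupied-weight'' bound (Lemma~\ref{lemma:occupied-weight}): any $cn$ arcs together carry total weight at most $(\rho_1+\rho_2)^k(\rho_3+f(c))$, so the collision probability at every step of the erasure process is $O((\rho_1+\rho_2)^k)=o(1)$ whenever $\rho_1+\rho_2<1$---and $\rho_1+\rho_2<1$ is exactly the condition ``at least three entries nonzero.'' From this the paper gets $o(n)$ total collisions (Lemma~\ref{lemma:erasures}) and the $\RMATErasure$/$\RMATRethrow$ coupling with $\mu'\sim\mu$ (Lemma~\ref{lemma:erasure-rethrow-equiv}). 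Your coupling for that pair is the same idea, but you hand-wave the crucial ``small total occupied weight'' claim that Lemma~\ref{lemma:occupied-weight} actually supplies via a nontrivial greedy argument. For $\RMATErasure$/$\RMATFlip$, the paper just observes that the marginals $1-(1-\omega_e)^m$ coincide and asserts an exact coupling (Lemma~\ref{lemma:erasure-flip-equiv}) without addressing the joint at all.

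Your Poissonization is more honest about the joint, but the bridge from Poisson-$m$ throws back to fixed-$m$ throws is not ``standard'' here and in fact fails in total variation. The Poisson throw count fluctuates by $\Theta(\sqrt m)$, and each extra or missing throw lands on an unoccupied cell with probability $1-o(1)$ (precisely because the occupied mass is $o(1)$), so the Poissonized and fixed-$m$ graphs differ in $\Theta(\sqrt m)$ arcs with high probability. Concretely, in $\RMATErasure$ the arc count is $m$ minus a collision count and hence never exceeds $m$, whereas in $\RMATFlip$ the arc count is a Poisson--Binomial with mean $m-O(m^2\!\sum_e\omega_e^2)$ and standard deviation $\Theta(\sqrt m)$; for, say, the uniform initiator the event ``more than $m$ arcs'' has probability tending to $1/2$ in $\RMATFlip$ but is impossible in $\RMATErasure$. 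So neither your argument nor the paper's Lemma~\ref{lemma:erasure-flip-equiv} actually establishes total-variation equivalence between these two models. Separately, your appeal to an ``implicit $\alpha\leq 1/2$'' to force $\alpha^2+\beta^2+\gamma^2+\delta^2<1/2$ imports a hypothesis that is not part of this theorem; the paper's occupied-weight route needs only $\rho_1+\rho_2<1$ and therefore covers, e.g., $(\alpha,\beta,\gamma,\delta)=(0.7,0.2,0.1,0)$, where your $\sum_e m\omega_e^2$ bound diverges.
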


To prove this, we first establish that---in the sparse case---the erasure and rethrow models
result asymptotically in the same process (Lemma~\ref{lemma:erasure-rethrow-equiv}, which uses
Lemmas~\ref{lemma:occupied-weight} and~\ref{lemma:erasures} to bound the number of rethrows).
Equivalence with the coin-flipping model then follows easily in Lemma~\ref{lemma:erasure-flip-equiv}
using the edge probabilities defined in the general \RMAT process.

We start by estimating the probability that an arc lands on an occupied cell and
therefore is handled differently in the erasure and rethrow models.
In the remainder of this section, we fix an initiator matrix~$\twobytwo(\alpha, \beta, \gamma, \delta)$
and let ~$\rho_1 \geq \rho_2 \geq \rho_3 \geq \rho_4$ be its entries ordered by size. As before we denote by~$k$ the number of
Kronecker-multiplications and by~$\mu$ the density parameter. The following lemma
holds even when for a superlinear number of arcs and we state it in that generality,
however, our subsequent application will again assume a linear number of arcs.

\begin{lemma}\label{lemma:occupied-weight}
  Let~$\dir G$ be a graph on~$n = 2^k$
  vertices with~$cn$ arcs for~$1 \leq c \leq k / 2$. There exists a
  function~$f$ depending on~$\rho_1,\rho_2$ and~$\rho_3$ such that
  \[
    \sum_{e \in \dir G} \omega_{e} \leq (\rho_1+\rho_2)^k \left( \rho_3 + f(c) \right).
  \]
\end{lemma}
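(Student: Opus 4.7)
The plan is to bound the sum in the extremal case where $\dir G$ consists of the $cn$ arcs of largest weight. For every arc $e$, the factorisation $\omega_e = \rho_1^{a_1(e)}\rho_2^{a_2(e)}\rho_3^{a_3(e)}\rho_4^{a_4(e)}$, where $a_i(e)$ counts the positions of the endpoints' bitstrings whose letter contributes $\rho_i$, lets me group arcs by the coarse parameter $j(e) := a_3(e)+a_4(e)$---the number of ``low-value'' positions. A direct count shows that there are $\binom{k}{j}\,2^k$ arcs with $j(e)=j$, and their total weight is $\binom{k}{j}(\rho_1+\rho_2)^{k-j}(\rho_3+\rho_4)^j$; in particular, the $2^k$ ``nice'' arcs at level $j=0$ contribute exactly $(\rho_1+\rho_2)^k$, which is the distinguished leading factor in the statement.

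Since $cn = c \cdot 2^k$, in the extremal configuration $\dir G$ absorbs all $2^k$ nice arcs and spreads the remaining $(c-1)\cdot 2^k$ arcs over levels $j \geq 1$ in decreasing order of individual weight. The maximum-weight arc at level $j$ has weight $\rho_1^{k-j}\rho_3^j$, and the key algebraic move is the rewriting
\[
\rho_1^{k-j}\rho_3^j = (\rho_1+\rho_2)^k \left(\frac{\rho_1}{\rho_1+\rho_2}\right)^{k-j}\left(\frac{\rho_3}{\rho_1+\rho_2}\right)^j,
\]
which pulls the factor $(\rho_1+\rho_2)^k$ out uniformly across all levels. The $j=1$ contribution furnishes the $\rho_3$ summand of the target bound, and the contributions from $j \geq 2$ together with sub-maximal level-$1$ arcs assemble into the tail $f(c)$.

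The main obstacle is verifying that $f(c)$ is truly independent of $k$. After factoring, the residual sum has the form $\sum_{j \geq 1} N_j \cdot (\rho_1/(\rho_1+\rho_2))^{k-j}(\rho_3/(\rho_1+\rho_2))^j$, where $N_j$ counts the extra arcs consumed at level $j$, which a priori inherits a $k$-dependence through the binomial counts $\binom{k}{j}$ and the fact that $c$ is allowed to grow like $k/2$. The hypothesis $1 \leq c \leq k/2$ bounds the effective range of $j$ to the regime where the Stirling-based binomial estimate from the preliminaries controls the partial binomial sums. A careful geometric-series argument, exploiting the ratio $\rho_3/(\rho_1+\rho_2) < 1$, then shows the $k$-dependence cancels, yielding an $f(c)$ that depends only on $c$ and on the ratios among $\rho_1,\rho_2,\rho_3$ (in particular with no residual $\rho_4$, because $\rho_3 \geq \rho_4$ absorbs both low-value letters in the worst-case bound).
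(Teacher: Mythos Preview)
Your high-level strategy matches the paper's: pass to the extremal graph of the $cn$ heaviest arcs, stratify by the number of ``low-value'' positions, and pull out the factor $(\rho_1+\rho_2)^k$. The gap is in how you bound the residual.

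You propose to bound the contribution of level $j$ by $N_j$ times the \emph{maximum} single-arc weight $\rho_1^{k-j}\rho_3^j$ at that level, obtaining (after dividing by $(\rho_1+\rho_2)^k$) a term $N_j\,(\rho_1/(\rho_1+\rho_2))^{k-j}(\rho_3/(\rho_1+\rho_2))^{j}$. This does not give a $k$-free bound. Concretely, for $c\le k/2$ all the $(c-1)2^k$ leftover arcs fit at level $j=1$, and your bound on that block is
\[
(c-1)\,2^k \cdot \rho_1^{k-1}\rho_3
\;=\; (c-1)\,\frac{\rho_3}{\rho_1}\,(2\rho_1)^k,
\]
which, divided by $(\rho_1+\rho_2)^k$, carries the factor $\bigl(2\rho_1/(\rho_1+\rho_2)\bigr)^{k}$; this grows exponentially in $k$ whenever $\rho_1>\rho_2$. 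So the ``geometric-series in $\rho_3/(\rho_1+\rho_2)$'' picture is misleading: the $k$-dependence does not cancel along this route, and no amount of Stirling on partial binomial sums will rescue it, because the blow-up lives in the per-arc factor, not in the counting.

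The paper avoids this by bounding each level by its \emph{total} weight $\binom{k}{j}\rho_3^{\,j}(\rho_1+\rho_2)^{k-j}$ (no $\rho_1^{k-j}$ appears), summed only over $j\le g$ where $g$ is chosen so that levels $0,\dots,g$ already contain $cn$ arcs. The counting gives $g=2(c-1)/k$, and then one uses $\binom{k}{j}\le (ek/j)^j$ together with the elementary fact that $(\xi k)^{1/k}\le e^{\xi}$ to turn $\bigl(\tfrac{ek\rho_3}{\rho_1+\rho_2}\bigr)^{g}$ into $e^{\Theta(\rho_3 (c-1)/(\rho_1+\rho_2))}$, which is the $k$-free $f(c)$. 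In short: replace ``max weight times count'' by ``total weight over the first $g\sim c/k$ levels,'' and replace the geometric-series heuristic by the bound $(\xi k)^{c/k}\le e^{\xi c}$.
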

\begin{proof}
  The number of weights that only consist of the largest factors~$\rho_1$
  and~$\rho_2$ is exactly~$2^k = n$. For~$c = 1$, we have
  that at most a total weight of
  \[
    \sum_{i=0}^k {k \choose i} \rho_1^i \rho_2^{k-i} = (\rho_1+\rho_2)^k
  \]
  is occupied by arcs in~$\dir G$. We take these highest weights and replace up
  to~$g$ positions with~$\rho_3$ in order to increase maximum weight covered by~$cn$ arcs: this
  provides us with at least
  \begin{align*}
    &\phantom{{}={}}
      \sum_{j=0}^g {k \choose j} \sum_{i=0}^{k-j} {k-j \choose i} = \sum_{j=0}^g {k \choose j} 2^{k-j} \\
    &\geq 2^k + 2^k \sum_{j=1}^g \left( \frac{k}{2j} \right)^j \geq 2^k \left(1 + \frac{gk}{2} \right).
  \end{align*}
  arcs whose weights only consist of factors~$\rho_1,\rho_2$ and up to~$g$ factors~$\rho_3$. In order
  to now bound the total \emph{weight} such arcs can occupy, we solve for~$g$:
  \begin{align*}
    & 1 + \frac{gk}{2} = c
    \iff g = 2\frac{c-1}{k}.
  \end{align*}
  The total weight of the~$cn$ arcs in~$\dir G$ is therefore at most
  \begin{align*}
    &\phantom{{}\leq{}}
      \sum_{j=0}^g {k \choose j} \rho_3^j \sum_{i=0}^{k-j}
                 {k-j \choose i} \rho_1^i \rho_2^{k-j-i}\\
    &= \sum_{j=0}^g {k \choose j} \rho_3^j (\rho_1+\rho_2)^{k-j} \\
    &\leq (\rho_1+\rho_2)^k \left( \rho_3
       + \sum_{j=1}^g \left(
         \frac{ek\rho_3}{j(\rho_1+\rho_2)} \right)^j
                      \right)\\
    &\leq   (\rho_1+\rho_2)^k \left( \rho_3
      + 2\left(\frac{e\rho_3}{\rho_1+\rho_2} k\right)^g - 1\right).
  \end{align*}
  With the above value for~$g$, we bound the inner term by
  \begin{align*}
    \left(\frac{e\rho_3}{\rho_1+\rho_2} k \right)^g =
    \left( \frac{e\rho_3}{\rho_1+\rho_2} k \right)^{\frac{1}{k} \cdot 2(c-1)}
    \leq e^{ \frac{2}{\rho_1+\rho_2} \rho_3 (c-1) },
  \end{align*}
  where we used the fact that~$(\xi k)^{1/k}$ achieves its
  maximum at~$k=e/\xi$ for~$k > 0$. We conclude that the total
  weight of~$cn$ occupied arcs is at most
  \begin{align*}
    (\rho_1+\rho_2)^k \left( \rho_3 + f(c) \right),
  \end{align*}
  as claimed.
\end{proof}

\noindent
A direct consequence of Lemma~\ref{lemma:occupied-weight} is that the
probability that the~$i^\text{th}$ arc in the \RMAT process will hit an occupied
arc is at most
\[
  \P[e_i \in \dir G^{i-1}] \leq (\rho_1+\rho_2)^k (\rho_3+f(c)).
\]
We use this result to calculate the order of the number of expected collisions in \RMATErasure.
\begin{lemma}\label{lemma:erasures}
  The expected number of erased arcs
  in~$\RMATErasure\big(\twobytwo(\alpha,\beta,\gamma,\delta),\mu,k\big)$ is
  \[
    (\rho_1+\rho_2)^k (\rho_3+f(\mu)) \cdot \mu n = O( (\rho_1+\rho_2)^k n)
  \]
  with high probability.
\end{lemma}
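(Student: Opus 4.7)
\medskip
\noindent\textbf{Proof plan.} My plan is to combine Lemma~\ref{lemma:occupied-weight} with a Chernoff--Hoeffding concentration argument. The central observation is that if the general \RMAT process has already produced arcs $\dir G^{i-1}$, then the conditional probability that the $i^\text{th}$ thrown arc lands on an already-occupied cell (and is thus erased) equals exactly $\sum_{e \in \dir G^{i-1}} \omega_e$, since the weights of all $n^2$ cells sum to~$1$. Since we throw $\mu n$ arcs and $\mu \leq k/2$, we have $(i-1)/n \leq \mu \leq k/2$ throughout, so Lemma~\ref{lemma:occupied-weight} applies and uniformly upper-bounds this conditional probability by
\[
  p := (\rho_1+\rho_2)^k (\rho_3 + f(\mu))
\]
regardless of the history.

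Let $X_i$ be the indicator that the $i^\text{th}$ arc is erased. By linearity of expectation and the conditional bound above,
\[
  \E\Bigl[\,\sum_{i=1}^{\mu n} X_i\,\Bigr] \leq \mu n \cdot p = (\rho_1+\rho_2)^k(\rho_3+f(\mu)) \cdot \mu n,
\]
which matches the claimed expression; since $\rho_3$ and $f(\mu)$ are constants (as $\mu$ is treated as a constant), this is $O((\rho_1+\rho_2)^k n)$.

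For the high-probability statement, the main obstacle is that the $X_i$ are not independent: whether throw $i$ collides depends on all of $\dir G^{i-1}$. I would sidestep this via stochastic domination. Because $\P[X_i = 1 \mid \dir G^{i-1}] \leq p$ holds almost surely for every history, one can couple $(X_i)$ with independent Bernoulli$(p)$ variables $(Y_i)$ so that $\sum_{i=1}^{\mu n} X_i \leq \sum_{i=1}^{\mu n} Y_i$ pointwise. The Chernoff--Hoeffding bound from the preliminaries then applies to the independent sum $\sum Y_i$ with mean $\xi = \mu n p$; choosing deviation parameter $\delta = 1$ (so $c = 2\xi$) yields a failure probability of at most $(e/4)^{\xi}$, which is negligible whenever $\xi \to \infty$ and can be further sharpened if sharper constants are desired.

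The only subtle point is the regime where $(\rho_1+\rho_2)^k n$ is bounded, in which case $\xi = O(1)$ and Chernoff--Hoeffding does not give meaningful concentration. Here, however, the claim ``at most $C(\rho_1+\rho_2)^k n$ erasures with high probability'' reduces to showing that the number of erasures is $O(1)$ w.h.p., which follows directly from Markov's inequality applied with a slowly growing constant. Thus the stochastic-dominance plus Chernoff--Hoeffding approach covers the interesting regime, and a trivial Markov bound handles the remaining sparse regime.
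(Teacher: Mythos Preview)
Your approach is correct and mirrors the paper's proof: both bound~$\P[X_i = 1]$ via Lemma~\ref{lemma:occupied-weight} (the paper phrases this as~$\P[X_i=1] \leq \P[X_m=1]$ using monotonicity of the occupied weight in the erasure process) and then sum over~$i$ to obtain the expectation. Your treatment of the high-probability claim via stochastic domination by i.i.d.\ Bernoulli$(p)$ variables and Chernoff--Hoeffding is in fact more explicit than the paper, which simply invokes ``the usual concentration arguments'' without further detail.
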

\begin{proof}
  Let~$m = \mu n$ and consider the sequence~$(\dir G_i)_1^m$ of graphs generated
  by the model. Let~$(X_i)_1^m$ be a sequence of random binary variables where
  $X_i = 1$ iff the $i^\text{th}$ arc was erased. By Lemma~\ref{lemma:occupied-weight},
  we have that
  \[
    \P[X_i = 1] \leq \P[X_m = 1] \leq (\rho_1+\rho_2)^k (\rho_3+f(\mu)).
  \]
  The expected number of erased arcs is therefore
  \begin{align*}
    \E\left[\, \sum_{i=1}^m X_i \,\right] &\leq m \cdot \E[X_m]
      = m(\rho_1+\rho_2)^k (\rho_3+f(\mu))\\
      &= O( 2^k(\rho_1+\rho_2)^k),
  \end{align*}
  and by the usual concentration arguments the actual value is bounded by
  this quantity with high probability.
\end{proof}

\noindent
For example, with~$\rho_1+\rho_2 \leq 1/2$ we expect that only a constant
number of arcs will be erased and with~$\rho_1 + \rho_2 \leq 1/\sqrt{2}$ we
expect $O(\sqrt n)$ erasures (\cf~Figure \ref{figure:collisions}). For a real-world example,
note that Lemma~\ref{lemma:occupied-weight} guarantees for almost
all parameters listed in Table~\ref{tab:rmat-converted} that the probability
of an arc being erased given the size and existing number of arcs in the network is below~$5\%$. Notable exceptions
are the relatively small and dense networks~\netw{Ca-HEP-Ph}
and~\netw{Email-Inside} for which the bound fails to give meaningful values.

We now use Lemma~\ref{lemma:erasures} to prove the asymptotic equivalence of \RMATRethrow and \RMATErasure.

\begin{lemma}\label{lemma:erasure-rethrow-equiv}
  If~$\rho_1+\rho_2 < 1$, then we can couple the generation of
  $\dir G^\oplus = \RMATRethrow\big(\twobytwo(\alpha,\beta,\gamma,\delta),\mu,k\big)$
  with $\dir G^\ominus = \RMATErasure\big(\twobytwo(\alpha,\beta,\gamma,\delta),\mu',k\big)$ where
  $\mu' \sim \mu$ such that~$\dir G^\oplus = \dir G^\ominus$ with high probability.
\end{lemma}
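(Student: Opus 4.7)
The plan is to build a coupling of the two processes from a single source of randomness---a shared infinite sequence of i.i.d.\ arc throws---for which the equality $\dir G^\oplus = \dir G^\ominus$ holds by construction, and then to use Lemma~\ref{lemma:occupied-weight} together with a Chernoff--Hoeffding argument to establish the scaling $\mu' \sim \mu$.

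First, I would let $(e_i)_{i \geq 1}$ be an infinite sequence of i.i.d.\ arcs sampled from the RMAT coin-flipping distribution (so $e_i = uv$ with probability $\omega_{uv}$). Both processes consume arcs from this sequence in order. The rethrow process walks through $(e_i)$, placing each arc not already present in its graph and discarding (``rethrowing'') any repeat, until it has collected $m = \mu n$ distinct arcs; let $N = m + R$ be the total number of throws consumed, where $R$ counts the rethrows. I would then define the coupled erasure process with density $\mu' := N/n$ to be the process that uses exactly the prefix $e_1, \ldots, e_N$ and retains its distinct arcs. By construction $\dir G^\oplus = \dir G^\ominus$ under this coupling, so it suffices to show that $\mu' = \mu + R/n \sim \mu$, i.e.\ that $R/n = o(1)$ \whp

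To bound $R$, I would observe that throughout the run of the rethrow process the current graph contains at most $m = \mu n$ arcs, so Lemma~\ref{lemma:occupied-weight} gives the uniform per-step collision probability bound $p := (\rho_1+\rho_2)^k (\rho_3 + f(\mu))$. The number of rethrows is therefore stochastically dominated by the number of failures before $m$ successes in an i.i.d.\ Bernoulli$(1-p)$ sequence, that is, a negative-binomial random variable with mean $mp/(1-p)$. The Chernoff--Hoeffding reformulation at the end of Section~\ref{sec:prelims} then yields $R = O((\rho_1+\rho_2)^k n)$ with high probability. Since the hypothesis $\rho_1+\rho_2 < 1$ forces $(\rho_1+\rho_2)^k = o(1)$, we obtain $R = o(n)$ \whp, and hence $\mu' \to \mu$ in probability, which is the intended reading of $\mu' \sim \mu$.

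The main obstacle is that the collision probabilities during \RMATRethrow are adaptive, depending on the random, evolving set of placed arcs, so Lemma~\ref{lemma:erasures}'s bound cannot be applied to the rethrow sequence verbatim. I would handle this by the stochastic-domination argument described above, which replaces the adaptive process by an i.i.d.\ Bernoulli sequence at the worst-case per-step probability $p$ (valid because Lemma~\ref{lemma:occupied-weight} bounds $p$ uniformly whenever the placed-arc count stays below $\mu n$, which it does throughout the run). The estimate $p = O((\rho_1+\rho_2)^k)$ is the only place the hypothesis $\rho_1+\rho_2 < 1$ enters, and it is precisely what makes both the coupling valid and the scaling correction between $\mu$ and $\mu'$ asymptotically negligible.
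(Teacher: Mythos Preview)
Your proposal is correct and uses essentially the same coupling as the paper: a single shared i.i.d.\ sequence of arc throws is read both as the erasure process and as the rethrow process, and the collision bound from Lemma~\ref{lemma:occupied-weight} shows the discrepancy in the number of throws is $o(n)$. The only cosmetic difference is direction---the paper fixes a deterministic $\mu'$, runs the erasure process first, and invokes Lemma~\ref{lemma:erasures} directly to conclude the resulting graph has $(1-o(1))\mu' n \sim \mu n$ arcs, whereas you run the rethrow process first and let $\mu' = N/n$ be random; your stochastic-domination argument is a slightly more explicit version of the same concentration step.
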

\begin{proof}
  We first generate the sequence~$(\dir G^\ominus_i)_1^{m'}$ with~$m' = \mu' n$ using the
  erasure model. By Lemma~\ref{lemma:erasures}, the resulting graph~$G^\ominus_{m'}$
  will have, with high probability,
  \[
    \left(1 - (\rho_1+\rho_2)^k (\rho_3+f(\mu')) \right) \mu' n
    = \left(1-o(1)\right) \mu' n \sim \mu n
  \]
  arcs. To generate~$(\dir G^\oplus_i)_1^m$ we simply add the first~$\mu n$ arcs
  in the same order, reinterpreting an erasure as a rethrow.
\end{proof}

\noindent
Using the probability of an arc's existence in the \RMAT process 
(\cf~Preliminaries), we can similarly relate \RMATErasure and \RMATFlip.

\begin{lemma}\label{lemma:erasure-flip-equiv}
  If~$\rho_1+\rho_2 < 1$,
  then we can couple the generation of
  $\dir G^{\$} = \RMATFlip\big(\twobytwo(\alpha,\beta,\gamma,\delta),\mu,k\big)$
  with $\dir G^\ominus = \RMATErasure\big(\twobytwo(\alpha,\beta,\gamma,\delta),\mu,k\big)$
  such that~$\dir G^{\$} = \dir G^\ominus$.
\end{lemma}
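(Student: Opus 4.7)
The plan is to obtain the coupling in a single shot: run the $\RMAT$ erasure process once and use its output as both $\dir G^\ominus$ and $\dir G^{\$}$. The key enabling fact is that $\RMATFlip$ was defined so that its per-arc flip probability equals the marginal probability produced by the erasure process. From the binomial identity stated just above the definition of $\RMATFlip$, since each of the $m$ independent $\RMAT$ throws selects arc $uv$ with probability $\omega_{uv}$, the count of throws on~$uv$ is $\mathrm{Binomial}(m, \omega_{uv})$, hence
\[
  \P[uv \in \dir G^\ominus] \;=\; 1 - (1-\omega_{uv})^m,
\]
which by inspection is exactly the Bernoulli parameter used by $\RMATFlip$.

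Concretely, I would execute three steps in order. First, generate the sequence $(\dir G^\ominus_i)_{i=1}^m$ by running the erasure model with parameters $\big(\twobytwo(\alpha,\beta,\gamma,\delta), \mu, k\big)$, letting $\dir G^\ominus = \dir G^\ominus_m$ denote the resulting simple graph. Second, declare $\dir G^{\$} := \dir G^\ominus$, so that the asserted identity $\dir G^{\$} = \dir G^\ominus$ holds by construction. Third, certify that the graph obtained in this way is distributed according to $\RMATFlip$, by identifying for every potential arc $uv$ the event $\{uv \in \dir G^{\$}\}$ with $\{\mathrm{Binomial}(m,\omega_{uv}) \geq 1\}$ and reading off the flip probability from the display above. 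The hypothesis $\rho_1+\rho_2 < 1$ enters through Lemma~\ref{lemma:erasures}, which keeps the erasure process in the sparse regime and—unlike in Lemma~\ref{lemma:erasure-rethrow-equiv}—lets us reuse the same parameter $\mu$ in both models with no rescaling.

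The main obstacle is the third step. The per-arc marginals in $\dir G^\ominus$ match the $\RMATFlip$ Bernoulli probabilities on the nose, but the multinomial structure of the $m$ erasure-model throws ties the arc-indicators together through the common throw sequence, whereas $\RMATFlip$ declares them independent. Closing this gap is where the assumption $\rho_1+\rho_2 < 1$ does its work: Lemma~\ref{lemma:erasures} bounds the expected number of collisions by $O\!\big((\rho_1{+}\rho_2)^k\, n\big) = o(n)$, so the $m$ throws effectively behave like $m$ independent per-arc binomial experiments, and the identification $\dir G^{\$} = \dir G^\ominus$ realizes the required product distribution on arcs.
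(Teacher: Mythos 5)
Your construction is the one the paper itself uses: run the erasure process once, note that each arc's marginal probability of presence is $1-(1-\omega_{uv})^m$ by the binomial computation preceding the definition of \RMATFlip, and identify the output with $\dir G^{\$}$. In strategy, your proposal and the paper's proof coincide.

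The obstacle you raise in your final paragraph is genuine, and the paper's own two-line proof silently steps over it: the arc indicators under \RMATErasure are occupancy indicators of a multinomial allocation of $m$ throws and are therefore negatively associated, while \RMATFlip makes them independent; the two joint laws are not identical, so no coupling can achieve the literal equality $\dir G^{\$}=\dir G^\ominus$ with probability one. Your proposed repair via Lemma~\ref{lemma:erasures} does not close this gap: that lemma counts throws landing in already-occupied cells, which is a different quantity from the pairwise dependence between indicators --- the negative correlation between $\mathbf{1}[uv\in\dir G^\ominus]$ and $\mathbf{1}[u'v'\in\dir G^\ominus]$ arises because the throws share a fixed budget of $m$, and it is nonzero even when the expected number of collisions is $o(1)$. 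What the sparsity hypothesis actually delivers is that the joint law is close to the product law (e.g.\ in total variation), which suffices for the asymptotic equivalence asserted in Theorem~\ref{theorem:erasure-rethrow-equiv} but is weaker than exact equality. Since the paper's proof makes the identical overclaim without flagging it, your write-up matches the paper's argument in both approach and level of rigor, and is more candid about where it is thin.
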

\begin{proof}
  We first generate the sequence~$(\dir G^\ominus_i)_1^{m}$ with~$m = \mu n$ using the
  erasure model. As observed above, the probability that an arc~$e$ is present in the
  graph is given by
  \[
    \P[e \in \dir G^\ominus] = 1 - (1-\omega_e)^m,
  \]
  which is exactly the probability that the arc is contained in~$\dir G^{\$}$.
\end{proof}

\subsection{Differences between \RMAT and \SKG}

\noindent
We naturally ask ourselves whether Theorem~\ref{theorem:erasure-rethrow-equiv}
is true for \SKG and \RMAT. Our primary observation will be that while most
arc-probabilities in the models converge, the speed of this convergence depends
on the respective arc-weights.
To demonstrate this skew, let us first introduce the following
variant on Bernoulli's inequality. For brevity's sake, we use the
symbol~$\gtreqless$ which indicates that the following lemma is true if all
appearances of~$\gtreqless$ are simultaneously replaced by either~$\leq$
or~$\geq$.

\begin{lemma}\label{lemma:bernoulli}
  For every function~$f\colon \R \to \R^+$ and integer~$t \geq 1$ with
  $f(1) \gtreqless 1$ it holds that
  \[
    (1-x)^t \gtreqless 1 - f(t) x
  \]
  for every~$x \in (0,1]$ with~$x \gtreqless 1 - \frac{f(t)-1}{f(t-1)}$.
\end{lemma}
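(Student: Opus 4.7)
Plan: I would prove this by induction on $t$, treating both directions in parallel since the algebra is identical up to reversing inequalities; for concreteness I describe the $\geq$ case (where $f(1) \geq 1$ and $x \geq 1 - \frac{f(t)-1}{f(t-1)}$). The base case $t = 1$ is immediate: the claim $1 - x \geq 1 - f(1)\,x$ reduces to $(f(1)-1)\,x \geq 0$, which holds because $f(1) \geq 1$ and $x > 0$.

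For the inductive step ($t \geq 2$), I would factor $(1-x)^t = (1-x) \cdot (1-x)^{t-1}$, apply the inductive hypothesis to the second factor, and use $1-x \geq 0$ to multiply through, obtaining
\begin{align*}
(1-x)^t \;\geq\; (1-x)\bigl(1 - f(t-1)\, x\bigr) \;=\; 1 - (1+f(t-1))\, x + f(t-1)\, x^2.
\end{align*}
The remaining step is to compare this quadratic lower bound to the target $1 - f(t)\,x$. After cancelling the constant terms and dividing by $x > 0$ the required inequality collapses to $f(t-1)\,x \geq 1 + f(t-1) - f(t)$, which rearranges to
\begin{align*}
x \;\geq\; 1 - \frac{f(t)-1}{f(t-1)},
\end{align*}
exactly the hypothesis of the lemma. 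So the inductive step is just a single line of quadratic algebra once the factorisation is in place.

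The part I expect to be the main obstacle is not this algebra but ensuring that the inductive hypothesis is actually available on the same range of $x$: invoking the lemma at $t-1$ with the same $f$ requires $x \geq 1 - \frac{f(t-1)-1}{f(t-2)}$, and this is \emph{not} automatically implied by the step-$t$ condition for an arbitrary function $f$. To make the induction fully rigorous for a general $f$, I would invoke the lemma at $t-1$ not with $f$ itself but with an auxiliary function $g$ that agrees with $f$ at the single index $t-1$ (so the IH still yields $(1-x)^{t-1} \geq 1 - f(t-1)\,x$) while its values at smaller arguments are chosen large enough that each intermediate bound $1 - \frac{g(i)-1}{g(i-1)}$ becomes vacuous. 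In the intended downstream use of the lemma, the sequence of ratios $(f(t)-1)/f(t-1)$ will be monotone, so this chaining of conditions is automatic and the induction can be written cleanly without invoking an auxiliary $g$. The $\leq$ direction then follows by precisely the same argument with every inequality reversed and with $f(1) \leq 1$ in the base case.
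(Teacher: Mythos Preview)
Your inductive argument is exactly the paper's: factor out $(1-x)$, apply the hypothesis at $t-1$, and reduce the remaining quadratic inequality to the stated condition on $x$. Where you go further is in flagging that the hypothesis at $t-1$ carries its own range restriction $x \gtreqless 1 - \frac{f(t-1)-1}{f(t-2)}$, which the step-$t$ condition does not in general imply. The paper's proof simply omits this check.

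The gap is real---in fact the lemma as stated is false. In the $\geq$ direction take $t=3$, $f(1)=1$, $f(2)=\tfrac{1}{10}$, $f(3)=\tfrac{21}{20}$: then $f(1)\geq 1$ and $x=\tfrac12$ satisfies $x \geq 1 - \tfrac{f(3)-1}{f(2)} = \tfrac12$, yet $(1-\tfrac12)^3 = \tfrac18 < \tfrac{19}{40} = 1 - \tfrac{21}{20}\cdot \tfrac12$.

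Your auxiliary-function repair does not rescue the general statement. With $g(t-1)=f(t-1)$ fixed, choosing the earlier $g$-values large drives the top threshold $1 - \tfrac{g(t-1)-1}{g(t-2)}$ towards $1$ rather than $-\infty$; choosing $g(t-2)$ small helps only when $f(t-1)>1$, and for $t=3$ it collides with the requirement $g(1)\geq 1$. More to the point, once the statement is false at some level, no choice of $g$ yields a valid inductive hypothesis there.

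Your last remark is the correct salvage. The paper only ever applies the lemma with $f(t)=\tfrac23 t$, for which the threshold $1 - \tfrac{f(t)-1}{f(t-1)} = \tfrac{1}{2(t-1)}$ is monotone in $t$, so the step-$t$ condition does imply the step-$(t-1)$ condition and the induction is sound. A clean general version would require $x \gtreqless 1 - \tfrac{f(s)-1}{f(s-1)}$ for every $2 \leq s \leq t$, not just for $s=t$.
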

\begin{proof}
  We use induction over~$t$. Since~$f(1) \gtreqless 1$, we have
  that~$1-x \gtreqless 1 - f(1) x$ and hence the basis for induction.
  Then it follows that
  \begin{align*}
    &\phantom{{}={}} (1-x)^t \\
    &= (1-x)^{t-1} (1-x)\\
    &\gtreqless (1-f(t-1)x) (1-x) \\
    &= 1 - f(t)x + x\left(f(t)-f(t-1)-1+f(t-1)x \right).
  \end{align*}
  Since~$x > 0$, the bound follows when
  \begin{alignat*}{3}
   f(t) - f(t-1) - 1 + f(t-1)x &\gtreqless 0 \\
   f(t-1)x &\gtreqless f(t-1) - f(t) + 1 \\
   x &\gtreqless 1 - \frac{f(t) - 1}{f(t-1)},
  \end{alignat*}
  as claimed.
\end{proof}

\noindent
For~$f(t) = t$ and~$x \geq 0$ we simply recover the Bernoulli-bound
\[
  (1-x)^t \geq 1-tx.
\]
Since equality is reached exactly whenever
\[
  f(t) = \frac{1-(1-x)^t}{x},
\]
we see that the approximation~$f(t) = t$ is best for very small~$x$. Further,
we have the following asymptotic relationships for particular dependencies of~$x$ and~$t$:
\begin{center}
  \begin{tabular}{ll>{\hspace*{-9pt}}l>{\hspace*{-7pt}}l}
    $x = t^{-1}$   &$\implies$   & $(1-(1-t^{-1})^t) \cdot t$ & $\sim t - \frac{1}{e}t$ \\
    $x = t^{-1.5}$ &$\implies$   & $(1-(1-t^{-1.5})^t) \cdot t^{1.5}$ & $\sim t - \sqrt{t}$ \\
    $x = t^{-2}$   &$\implies$   & $(1-(1-t^{-2})^t) \cdot t^2$ & $\sim t - \log t$
  \end{tabular}
\end{center}
Relating these asymptotic relations to \SKG and \RMAT probabilities,
the above suggests for weights~$\omega_e = \Theta(n^{-2})$ that the binomial arc-probability
in \RMAT models is best approximated by
\[
  1 - (1 - \omega_e)^m \sim (m - \log m) \cdot \omega_e,
\]
which is reasonably close to the corresponding arc probability~$m \omega_e$ in the \SKG model.
For an arc weight of~$\omega_e = \Theta(n^{-1})$, however, we have that
\[
  1 - (1 - \omega_e)^m \sim \frac{e-1}{e} m \cdot \omega_e \approx \frac{2}{3} m \cdot \omega_e.
\]
Hence arcs with large weight will have significantly different probabilities in
\SKG compared to the \RMAT models. Since this difference is inhomogeneous in all interesting
cases it cannot be remedied by simple scaling of probabilities.

On the positive side, most arc probabilities are reasonably similar in both models and
we can prove a weaker kind of equivalence between \RMAT and \SKG. The following
relationship between \RMATFlip and \SKG extends via Theorem~\ref{theorem:erasure-rethrow-equiv}
to the other two \RMAT variants. Note that the factor~$2/3$ in the following is
chosen for convenience and can be replaced by any fixed number in~$(0,1)$.

\begin{theorem}\label{theorem:SKG-sandwich}
  Assume that~$\rho_1 < 1/2$. Let~$\theta =  2 \mu^{1/k}$.
  We can couple the generation of
  $G^{\$} = \RMATFlip\big(\twobytwo(\alpha,\beta,\gamma,\delta),\mu,k\big)$
  with $G_1^{\otimes} = \SKG\big(\theta\twobytwo(\alpha,\beta,\gamma,\delta),\frac{2}{3}\mu,k\big)$
  and~$G_2^{\otimes} = \SKG\big(\theta\twobytwo(\alpha,\beta,\gamma,\delta),\mu,k\big)$ such that
  $G_1^{\otimes} \subseteq G^{\$} \subseteq G_2^{\otimes}$.
\end{theorem}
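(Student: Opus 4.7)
The plan is to build an arc-wise monotone coupling. For every potential arc~$e$ I draw a single uniform $U_e \in [0,1]$, independently across arcs, and include~$e$ in $G_1^\otimes$, $G^{\$}$, or $G_2^\otimes$ iff $U_e$ falls below the respective marginal probability. Since all three models produce arcs independently, this gives a valid draw from each model simultaneously, and the containment $G_1^\otimes \subseteq G^{\$} \subseteq G_2^\otimes$ holds deterministically once I verify the pointwise inequality $\P[e \in G_1^\otimes] \leq \P[e \in G^{\$}] \leq \P[e \in G_2^\otimes]$ for every potential arc.

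Next I would read off the three marginals. By definition $\P[e \in G^{\$}] = 1 - (1-\omega_e)^m$ with $m = \mu n$. For \SKG with initiator $2\nu^{1/k}\twobytwo(\alpha,\beta,\gamma,\delta)$ (calibrated so that the expected number of arcs equals $\nu n$), arc~$e$ occurs independently with probability $(2\nu^{1/k})^k\omega_e = \nu n\omega_e$; specialising to $\nu = \mu$ and $\nu = \tfrac{2}{3}\mu$ yields $\P[e \in G_2^\otimes] = m\omega_e$ and $\P[e \in G_1^\otimes] = \tfrac{2}{3}m\omega_e$. The upper half of the sandwich then follows from the standard Bernoulli inequality $(1-\omega_e)^m \geq 1 - m\omega_e$.

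The lower half is where the real work sits. I would invoke Lemma~\ref{lemma:bernoulli} in its $\leq$ direction with $f(t) = \tfrac{2}{3}t$; since $f(1) = 2/3 \leq 1$, the lemma applies and yields $(1-\omega_e)^m \leq 1 - \tfrac{2}{3}m\omega_e$ whenever $\omega_e \leq 1 - \frac{f(m)-1}{f(m-1)} = \frac{1}{2(m-1)}$. To discharge this hypothesis uniformly over all arcs I would use the coarse bound $\omega_e \leq \rho_1^k$ (the product defining $\omega_e$ has $k$ factors, each at most~$\rho_1$), which turns the requirement into $\mu(2\rho_1)^k \lesssim \tfrac{1}{2}$. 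Because $\rho_1 < 1/2$ and $\mu$ is a constant, this holds for every $k$ above some threshold $k_0(\rho_1,\mu)$, and the coupling then goes through in this asymptotic regime.

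The main obstacle is precisely this uniform condition check. Lemma~\ref{lemma:bernoulli} with linear~$f$ delivers a multiplicative lower bound of the form $1 - c \cdot m\omega_e$ only while $m\omega_e$ is bounded away from~$1$, so the hypothesis $\rho_1 < 1/2$ is indispensable: it forces $m\omega_e \to 0$ uniformly across arcs, which dovetails with the discussion preceding the theorem where the authors note that \SKG and \RMAT probabilities diverge sharply exactly at arcs of weight $\Theta(n^{-1})$. The constant $2/3$ is not distinguished; any fixed constant in $(0,1)$ would produce the same statement with a correspondingly adjusted threshold~$k_0$, as already flagged by the theorem.
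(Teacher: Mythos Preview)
Your proof is correct and follows essentially the same route as the paper: apply Lemma~\ref{lemma:bernoulli} with $f(t)=\tfrac{2}{3}t$ to obtain the lower inequality $(1-\omega_e)^m \leq 1-\tfrac{2}{3}m\omega_e$ under the condition $\omega_e \leq \tfrac{1}{2(m-1)}$, discharge that condition via $\omega_e \leq \rho_1^k$ and $\rho_1 < 1/2$, and pair it with the standard Bernoulli upper bound before coupling. The paper is terser---it omits the explicit uniform-$U_e$ construction and simply writes ``the coupling is straightforward''---but the structure and the key ingredients are identical.
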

\begin{proof}
  We apply Lemma~\ref{lemma:bernoulli} using~$f(t) = \frac{2}{3} t$ and
  obtain that
  \[
    (1-x)^t \leq 1 - \frac{2}{3}t x
  \]
  whenever~$x \leq \frac{1}{2(t-1)}$. Let~$m = \mu n$ as usual. Because~$\rho_1 < 1/2$,
  eventually the inequality
  \[
    \rho_1^k \leq \frac{1}{2(m - 1)} = \frac{1}{2(\mu \cdot 2^k - 1)}
  \]
  holds. For every arc~$e$ we then have that
  \[
    \frac{2}{3} m \cdot \omega_e \leq 1 - (1 - \omega_e)^m \leq m \cdot \omega_e.
  \]
  Note that the upper and lower bounds are exactly the respective
  probabilities for the arc~$e$ in~$G_1^{\otimes}$ and~$G_2^{\otimes}$.
  Accordingly,
  \[
    \P[e \in G_1^{\otimes}] \leq \P[e \in G^{\$}] \leq \P[e \in G_2^{\otimes}]
  \]
  and the coupling is straightforward.
\end{proof}


\noindent
To supplement the theoretical claim that ``equivalent'' input parameters to
\SKG and \RMATFlip generate unequal probability distributions over the arcs,
we would like to demonstrate that this probability difference translates into
observable differences between the two models. We proved that the biggest
discrepancy between edge probabilities occurs in arcs with the largest
weights, which roughly translates to arcs in the lowest Hamming
slices\footnote{This effect is strongest when $\alpha, \beta$ are much
larger than~$\gamma$ and~$\delta$.}. Figure~\ref{figure:edge_count_diff} shows that the
average number of arcs in the lowest Hamming slices is noticeably larger when
using the \RMATFlip model. Since this (somewhat artificial) statistic can differentiate between the
two models the common approach of using \RMAT and \SKG interchangeably needs to be
scrutinized. The question of whether other, more natural statistics
diverge on these graph models will be interesting for future research.

\begin{figure}
  \centering
  \includegraphics[width=3.75in]{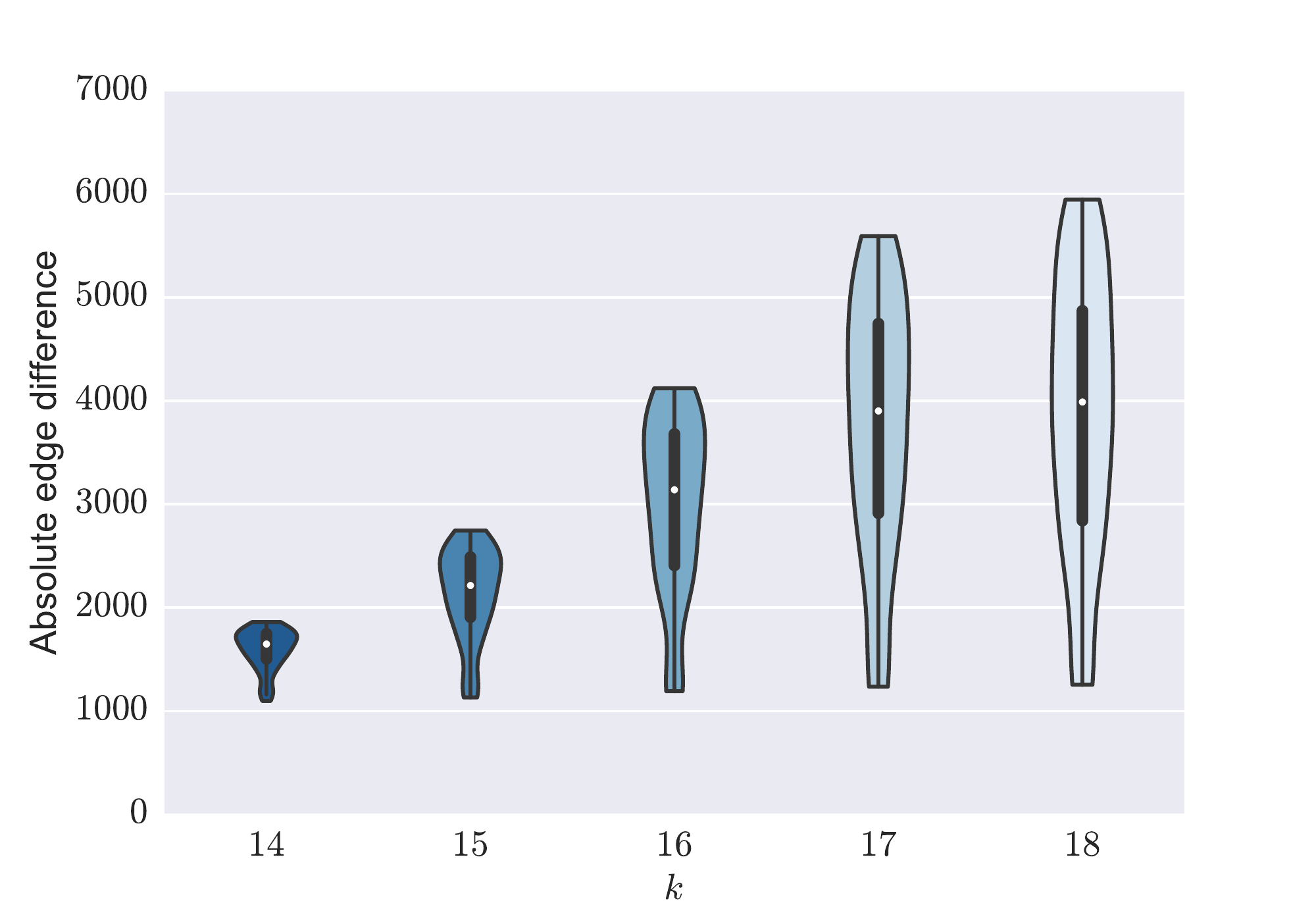}
  \caption{Distribution of the difference between number of edges in the
           subgraph induced on the $|\mathcal F_{\leq 6}|$ vertices of highest
           degree in graphs generated by \RMATFlip and \SKG. Ten graphs were
           generated using parameters $\mu = 6$ and $(0.45,0.275,0.275,0)$
           for each size.}\label{figure:edge_count_diff}
\end{figure}

%
\section{Degeneracy of Kronecker-style models }\label{sec:degeneracy}

\noindent
We now proceed to an analysis of the degeneracy of Kronecker-style models.
In keeping with prior results about degeneracy in random graph
models~\cite{farrell2015hyperbolicity,fernholz2003giant,pittel1996sudden}, we
will analyze how the degeneracy changes as we grow the graph size while
holding the average degree constant.
Specifically, we want to resolve whether the degeneracy is \emph{bounded}
(converges to a constant) or \emph{unbounded} (grows arbitrarily large
with the size of the graph).

Since degeneracy is a property of undirected graphs we will only consider \emph{symmetric}
initiator matrices of the form~\twobytwo(\alpha,\beta,\beta,\delta) in the following. The
edge~$uv$ is then present in the final graph if at least one of the arcs~$uv,vu$ is contained
in the generated digraph.
Due to Theorem~\ref{theorem:SKG-sandwich} we can translate any results for \SKG that holds independently
of the value of~$\mu$ (as long as $\mu$ does not scale with~$k$) to the \RMAT models and vice versa.
We focus on \SKG here since it is much easier to analyze.  Note that in the symmetric case we have
that the \emph{edge}~$uv$ is added with probability~$2m \omega_{uv}$.

The following expression will be crucial in all following calculations.
Fix a constant~$0 < \tau  < 1$ and consider a vertex~$x \in \mathcal
F_{\tau k}$. The expected number of edges from~$x$ to vertices in~$\mathcal
F_{\leq \tau k}$ (including a loop to itself) is then given by
\begin{align*}
  &\E[\deg_{\leq \tau k}(x)]\\
  &= 2m \sum_{\ell \leq \tau k} \sum_{i \leq \ell}
      {\tau k \choose i} {(1-\tau) k \choose \ell -i} \delta^i \beta^{\tau k + \ell - 2i} \alpha^{(1-\tau)k-\ell+i} \\
  &= 2m \alpha^{(1-\tau)k}\beta^{\tau k}
    \! \sum_{i \leq \ell \leq \tau k} \! {\tau k \choose i} {(1-\tau) k \choose \ell -i}
                          \Big(\frac{\delta}{\beta}\Big)^i \Big(\frac{\beta}{\alpha}\Big)^{\ell-i} \\
  &=: 2\mu \cdot (2\alpha)^k (\beta/\alpha)^{\tau k} \cdot \Lambda_\tau(k).
      \addtocounter{equation}{1}\tag{\theequation}\label{eq:degree}
\end{align*}
We derive the following lower bound on~$\Lambda_\tau$:

\begin{lemma}\label{lemma:sum-lower-bound}
  Assuming that~$\beta/\alpha - \delta/\beta \leq 1/e$,
  it holds that
  \begin{equation*}
      \Lambda_\tau(k) \geq \begin{cases}
          e^{(\tau\delta/e\beta + (1-\tau)\beta/e\alpha)k} & \text{for}~\tau \geq \frac{1}{1 + \frac{e\alpha}{\beta} - \frac{\alpha\delta}{\beta^2}} \\
          (\delta/\beta+\frac{1-\tau}{\tau} \beta/\alpha)^{\tau k} & \text{otherwise}.
      \end{cases}
  \end{equation*}
  For~$\beta/\alpha - \delta/\beta > 1/e$ the two bounds swap.
\end{lemma}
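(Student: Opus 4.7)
The plan is to rewrite $\Lambda_\tau(k)$ under the substitution $j = \ell - i$, giving the constrained double sum
\[
\Lambda_\tau(k) = \sum_{\substack{i,j \geq 0 \\ i + j \leq \tau k}} \binom{\tau k}{i} \binom{(1-\tau)k}{j} (\delta/\beta)^i (\beta/\alpha)^j,
\]
and then to derive each of the two claimed bounds by restricting this sum to a different subregion and applying the elementary inequality $\binom{n}{t} \geq (n/t)^t$ (valid whenever $n \geq t$).

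For the exponential bound I would isolate a single ``dominant'' term at $i^* = \tau k \cdot \delta/(e\beta)$ and $j^* = (1-\tau)k \cdot \beta/(e\alpha)$. A short calculation yields $\binom{\tau k}{i^*}(\delta/\beta)^{i^*} \geq (e\beta/\delta)^{i^*}(\delta/\beta)^{i^*} = e^{i^*}$ and symmetrically $\binom{(1-\tau)k}{j^*}(\beta/\alpha)^{j^*} \geq e^{j^*}$, so the isolated term contributes at least $e^{i^*+j^*} = e^{(\tau\delta/(e\beta)+(1-\tau)\beta/(e\alpha))k}$ to $\Lambda_\tau(k)$. The feasibility condition $i^*+j^* \leq \tau k$ required for $(i^*,j^*)$ to lie in the summation region rearranges exactly to $\tau \geq 1/(1 + e\alpha/\beta - \alpha\delta/\beta^2)$, which is precisely the stated threshold.

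For the polynomial bound I would restrict the sum to the diagonal $\ell = \tau k$ (so $j = \tau k - i$) and apply the termwise estimate $\binom{(1-\tau)k}{\tau k - i} \geq ((1-\tau)k/(\tau k - i))^{\tau k - i} \geq ((1-\tau)/\tau)^{\tau k - i}$, which follows from the same inequality $\binom{n}{t} \geq (n/t)^t$ whenever $\tau k - i \leq (1-\tau)k$. Factoring $\binom{\tau k}{i}$ through the sum and invoking the binomial theorem then yields
\[
\Lambda_\tau(k) \geq \sum_{i=0}^{\tau k} \binom{\tau k}{i} (\delta/\beta)^i \left(\frac{1-\tau}{\tau} \cdot \frac{\beta}{\alpha}\right)^{\tau k - i} = \left(\frac{\delta}{\beta} + \frac{1-\tau}{\tau} \cdot \frac{\beta}{\alpha}\right)^{\tau k}.
\]
This derivation covers the full range $i \in [0, \tau k]$ as long as $\tau \leq 1/2$, which I would verify is implied by $\tau$ lying below the stated threshold in the regime $\beta/\alpha - \delta/\beta \leq 1/e$.

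For the swapped case $\beta/\alpha - \delta/\beta > 1/e$, the same two strategies still produce the same pair of candidate lower bounds, but which one is the stronger in each $\tau$-range reverses---intuitively, this is when the $\beta/\alpha$-weighted factor begins to dominate the $\delta/\beta$-weighted one, exactly at $\beta/\alpha - \delta/\beta = 1/e$. The main technical obstacle I foresee is extending the polynomial derivation past $\tau = 1/2$, where the estimate $\binom{(1-\tau)k}{\tau k - i} \geq ((1-\tau)/\tau)^{\tau k - i}$ becomes vacuous for $i < (2\tau-1)k$; handling the missing contribution will likely require either shifting to a different diagonal $\ell < \tau k$ or combining the diagonal restriction with a separate isolated-term argument. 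The rounding of $i^*, j^*$ to integers in the exponential-bound step only contributes subexponential multiplicative factors and does not affect any of the stated exponential-type bounds.
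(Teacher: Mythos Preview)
Your approach is essentially the same as the paper's: both rely solely on the elementary bound $\binom{n}{t} \geq (n/t)^t$ and then locate the dominant contribution in the resulting expression. The paper organizes the computation slightly differently---it first applies the bound to both binomials simultaneously, collapses the inner sum over~$i$ to a single term of the form $\bigl((\tau\delta/\beta + (1-\tau)\beta/\alpha)/\ell\bigr)^\ell$, and then maximizes the resulting one-variable sum over~$\ell$. Your exponential bound (pick $(i^*,j^*)$ directly) and your polynomial bound (restrict to the diagonal $\ell=\tau k$ and use the binomial theorem) recover exactly the same two terms of that one-variable sum, so the content is identical.

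The one place the paper's packaging is cleaner is precisely the issue you flag: by reducing to a single sum in~$\ell$ before maximizing, the paper obtains the term at $\ell = \tau k$ directly as $\bigl(\delta/\beta + \tfrac{1-\tau}{\tau}\beta/\alpha\bigr)^{\tau k}$ without ever needing $\tau \leq 1/2$. Your diagonal argument loses the range $i < (2\tau-1)k$ when $\tau > 1/2$, and the fix you suggest (shift diagonals or patch with an isolated term) would work but is unnecessary overhead. If you want to avoid the case analysis entirely, simply mimic the paper's intermediate step: for each fixed~$\ell$, take the single index $i = \ell \cdot \tfrac{\tau\delta/\beta}{\tau\delta/\beta + (1-\tau)\beta/\alpha}$, which gives the term $(k/\ell)^\ell(\tau\delta/\beta + (1-\tau)\beta/\alpha)^\ell$ with no constraint on~$\tau$ beyond $0 < \tau < 1$.
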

\begin{proof}
  By applying the bound~${n \choose k} \geq (n/k)^k$ to both binomial coefficients,
  we obtain that
  \begin{align*}
    \Lambda_\tau(k) &\geq \sum_{\ell \leq \tau k} \! k^\ell
                  \sum_{i \leq \ell}
                      \Big(\frac{\tau \delta/\beta}{i}\Big)^i
                      \Big(\frac{(1-\tau)\beta/\alpha}{\ell-i}\Big)^{\ell-i} \\
                    &\geq \tau k \sum_{\ell \leq \tau k}
                      \Big(\frac{k}{\ell}\Big)^\ell (\tau\delta/\beta + (1-\tau)\beta/\alpha)^\ell.
  \end{align*}
  For $\tau \geq (1 + e\alpha/\beta - \alpha\delta/\beta^2)^{-1}$, this sum's largest
  term occurs at~$\ell = e(\tau\delta/\beta + (1-\tau)\beta/\alpha) k$ with a value of
  $
    e^{(\tau\delta/e\beta + (1-\tau)\beta/e\alpha)k}.
  $
  For $\tau < (1 + e\alpha/\beta - \alpha\delta/\beta^2)^{-1}$, the sum achieves its maximum at~$\tau k$
  with a value of
  $
    (\delta/\beta+\frac{1-\tau}{\tau} \beta/\alpha)^{\tau k}
  $,
  from which the second bound follows. The inequalities with respect to~$\tau$ invert
  whenever~$\beta/\alpha - \delta/\beta > 1/e$.
\end{proof}

\begin{corollary}\label{cor:deg-lower-bound}
  Assume that~$\beta/\alpha - \delta/\beta \leq 1/e$ and fix~$0 < \tau < 1$.
  For every vertex~$x \in \mathcal F_{\tau k}$ it holds that
  \begin{align*}
      \E[\deg_{\leq \tau k}(x)] &\geq k^{\Theta(1)} \left( 2 \alpha (\beta/\alpha)^\tau e^{(\tau\delta/e\beta + (1-\tau)\beta/e\alpha)} \right)^k
  \intertext{%
  for~$\tau \geq \frac{1}{1 + \frac{e\alpha}{\beta} - \frac{\alpha\delta}{\beta^2}}$ and
  }
      \E[\deg_{\leq \tau k}(x)] &\geq k^{\Theta(1)} \left( 2 \alpha (\delta/\alpha+\frac{1-\tau}{\tau} \beta^2/\alpha^2)^\tau \right)^k
  \end{align*}
  otherwise. For~$\beta/\alpha - \delta/\beta > 1/e$ the two bounds swap.
\end{corollary}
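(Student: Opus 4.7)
The plan is to substitute the bound from Lemma~\ref{lemma:sum-lower-bound} into Equation~(\ref{eq:degree}) and then regroup every factor into a single base raised to the $k$-th power. Equation~(\ref{eq:degree}) gives $\E[\deg_{\leq \tau k}(x)] = 2\mu \cdot (2\alpha)^k (\beta/\alpha)^{\tau k} \cdot \Lambda_\tau(k)$, so any lower bound on $\Lambda_\tau(k)$ translates directly into a lower bound on the expected degree once the exponents in $k$ are merged.

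For the regime $\tau \geq (1 + e\alpha/\beta - \alpha\delta/\beta^2)^{-1}$ under the hypothesis $\beta/\alpha - \delta/\beta \leq 1/e$, I apply the first branch of the lemma, $\Lambda_\tau(k) \geq e^{(\tau\delta/e\beta + (1-\tau)\beta/e\alpha) k}$. Multiplying through by $(2\alpha)^k (\beta/\alpha)^{\tau k}$ and consolidating into a single $k$-th power yields $\bigl(2\alpha \cdot (\beta/\alpha)^\tau \cdot e^{\tau\delta/e\beta + (1-\tau)\beta/e\alpha}\bigr)^k$, which is exactly the first claimed bound. For $\tau$ below that threshold, I substitute the second branch $\Lambda_\tau(k) \geq (\delta/\beta + \frac{1-\tau}{\tau}\beta/\alpha)^{\tau k}$; absorbing the $(\beta/\alpha)^{\tau k}$ prefactor into the base turns it into $(\delta/\alpha + \frac{1-\tau}{\tau}\beta^2/\alpha^2)^{\tau k}$, and the second bound falls out. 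The swap of the two cases when $\beta/\alpha - \delta/\beta > 1/e$ is inherited verbatim from the lemma and requires no separate argument.

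The one subtle point is the $k^{\Theta(1)}$ prefactor, which does not appear in the statement of Lemma~\ref{lemma:sum-lower-bound} and must be recovered from its proof. First, the crude estimate $\binom{n}{k} \geq (n/k)^k$ used there can be sharpened via Stirling to include a $\Theta(1/\sqrt{k})$ factor. Second, replacing the outer sum $\sum_{\ell \leq \tau k}$ by its largest term discards at most $\tau k + 1 = \Theta(k)$ equal-sign contributions. Together with the constant $2\mu$ from Equation~(\ref{eq:degree}), these polynomial corrections bundle into the stated $k^{\Theta(1)}$ prefactor.

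I do not anticipate a genuine obstacle here: the corollary is essentially an algebraic restatement of Lemma~\ref{lemma:sum-lower-bound} specialized to $\E[\deg_{\leq \tau k}(x)]$. The only care required is verifying which branch of the lemma applies in each $\tau$-regime and tracking the polynomial factors through the substitution; both are routine.
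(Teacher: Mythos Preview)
Your proposal is correct and matches the paper's (implicit) approach: the corollary is stated without proof precisely because it follows by substituting the bounds of Lemma~\ref{lemma:sum-lower-bound} into Equation~(\ref{eq:degree}) and collecting exponents, exactly as you describe. Your discussion of where the $k^{\Theta(1)}$ factor originates is a helpful elaboration; note in particular that the proof of Lemma~\ref{lemma:sum-lower-bound} already carries a stray $\tau k$ factor that is dropped in its statement, which together with the constant $2\mu$ accounts for the polynomial prefactor without needing to revisit Stirling.
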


\noindent The above bounds are general, but suffer from the usual shortcomings of approximating
binomial coefficients by simpler functions. The following bounds are geared towards special parametric
ranges and can be taken in conjunction to obtain a more complete picture of the parameter space:
\begin{corollary}\label{cor:simple-deg-lower-bound}
  For every~$0 < \epsilon < 1$ and~$\tau \leq \min\{\epsilon,1/2\}$ it holds that
  \begin{align*}
    \E[\deg_{\leq \tau k}(x)] &\geq k^{\Theta(1)} \Big( 2\alpha (\beta/\alpha)^{(1+\epsilon)\tau} 2^{H(\epsilon\tau/(1-\tau))(1-\tau)} \Big)^k. \\
  \end{align*}
\end{corollary}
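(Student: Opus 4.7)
The plan is to lower-bound the sum $\Lambda_\tau(k)$ in Equation~\eqref{eq:degree} by retaining a single carefully chosen term of the sum and then to combine it with the prefactor $2\mu \cdot (2\alpha)^k (\beta/\alpha)^{\tau k}$. Specifically, I would set $i = 0$ and $\ell = \epsilon \tau k$ (treating integrality with implicit floors/ceilings, which contribute only polynomial factors). This choice discards the $(\delta/\beta)^i$ contribution entirely and isolates the factor $(\beta/\alpha)^{\epsilon \tau k}$, which is the source of the exponent $(1+\epsilon)\tau$ in the claim once it multiplies the prefactor's $(\beta/\alpha)^{\tau k}$.

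Next I would invoke the Stirling-based binomial bound stated in the preliminaries: writing
\[
  \binom{(1-\tau)k}{\epsilon\tau k} \;\geq\; \frac{\sqrt{\pi}}{2}\cdot\frac{2^{(1-\tau)k\,H(t)}}{\sqrt{2\pi(1-\tau)k\,t(1-t)}}
  \quad\text{with}\quad t = \frac{\epsilon\tau}{1-\tau}.
\]
The assumption $\tau \leq \min\{\epsilon,1/2\}$ ensures $t \leq 2\epsilon\tau < 1$, so the entropy $H(t)$ is well-defined and the $1/\sqrt{k\,t(1-t)}$ factor is polynomial in $k$ and hence absorbable into the $k^{\Theta(1)}$ term on the right-hand side of the claim.

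Plugging this single-term lower bound into $\Lambda_\tau(k)$ and multiplying by the prefactor from \eqref{eq:degree} gives
\[
  \E[\deg_{\leq \tau k}(x)] \;\geq\; k^{\Theta(1)}\cdot(2\alpha)^k (\beta/\alpha)^{\tau k} (\beta/\alpha)^{\epsilon\tau k}\cdot 2^{(1-\tau)k\,H(\epsilon\tau/(1-\tau))},
\]
which, upon grouping the $k$-th powers, is exactly the expression in the statement. The constant $2\mu$ and the polynomial Stirling correction are all swept into $k^{\Theta(1)}$.

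I expect the calculation itself to be routine; the only point requiring a moment of care is justifying that the chosen pair $(i,\ell) = (0,\epsilon\tau k)$ is admissible as an index in the double sum, which amounts to $\epsilon\tau k \leq \tau k$ (immediate since $\epsilon \leq 1$) together with $\epsilon\tau k \leq (1-\tau)k$ so that the binomial coefficient is non-trivial; the latter follows from $\tau \leq 1/2$, giving $\epsilon\tau \leq \tau \leq 1-\tau$. The main conceptual step, which is also the tightest choice available from this family of single-term bounds, is recognizing that pushing weight onto $\ell - i$ (via $\beta/\alpha$) rather than $i$ (via $\delta/\beta$) is preferable whenever $\beta/\alpha > \delta/\beta$, as is typical in the regimes of interest; the opposite choice would yield an analogous bound with $\delta/\beta$ replacing $\beta/\alpha$ but is not needed for this corollary as stated.
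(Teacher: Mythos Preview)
Your proposal is correct and follows exactly the paper's own approach: the paper's proof also sets~$i=0$ and~$\ell=\epsilon\tau k$ in Equation~\eqref{eq:degree} and bounds the surviving binomial coefficient~$\binom{(1-\tau)k}{\epsilon\tau k}$ via the entropy expression, absorbing the Stirling correction into the~$k^{\Theta(1)}$ factor. Your write-up is in fact more careful than the paper's one-line proof, since you explicitly verify the admissibility of the index pair using the hypothesis~$\tau\leq\min\{\epsilon,1/2\}$.
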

\begin{proof}
  We set~$i = 0$ and~$\ell = \epsilon\tau k$ in Equation~\ref{eq:degree} and bound the remaining binomial
  coefficient by~${(1-\tau)k \choose \epsilon\tau k} \leq k^{\Theta(1)} 2^{H(\epsilon\tau/(1-\tau)) (1-\tau) k}$.
\end{proof}

\noindent
We can prove an analog to Lemma~\ref{lemma:sum-lower-bound} and Corollary~\ref{cor:deg-lower-bound} to obtain upper bounds using the same techniques.
We omit the proof here, note that the bounds on~$\tau$ and~$\Lambda_\tau$ differ slightly.

\begin{lemma}\label{lemma:sum-upper-bound}
  Assuming that~$\beta/\alpha - \delta/\beta \leq 1$,
  it holds that
  \begin{equation*}
      \Lambda_\tau(k) \leq \begin{cases}
          \tau k e^{(\tau\delta/\beta + (1-\tau)\beta/\alpha)k} & \text{for}~\tau \geq \frac{1}{1 + \frac{\alpha}{\beta} - \frac{\alpha\delta}{\beta^2}} \\
          \tau k (e\delta/\beta+\frac{1-\tau}{\tau} e\beta/\alpha)^{\tau k} & \text{otherwise}.
      \end{cases}
  \end{equation*}
  For~$\beta/\alpha - \delta/\beta > 1$ the two bounds swap.
\end{lemma}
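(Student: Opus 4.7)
The plan is to mirror the proof of Lemma~\ref{lemma:sum-lower-bound} step-for-step, inverting the direction of all inequalities and replacing the lower bound $\binom{n}{k} \geq (n/k)^k$ with the matching upper bound $\binom{n}{k} \leq (en/k)^k$. Starting from the definition of $\Lambda_\tau(k)$ in Equation~\eqref{eq:degree} and writing $a = \tau\delta/\beta$, $b = (1-\tau)\beta/\alpha$, this substitution yields
\[
  \Lambda_\tau(k) \leq \sum_{\ell \leq \tau k} (ek)^\ell \sum_{i \leq \ell} \Bigl(\frac{a}{i}\Bigr)^{i}\Bigl(\frac{b}{\ell-i}\Bigr)^{\ell-i}.
\]
The only structural change from the lower-bound derivation is the additional factor $e^\ell$ coming from the extra $e^k$ in the binomial upper bound; this factor is exactly what is responsible for the $e$ disappearing from the threshold condition (compare $1/(1 + e\alpha/\beta - \alpha\delta/\beta^2)$ in Lemma~\ref{lemma:sum-lower-bound} with $1/(1 + \alpha/\beta - \alpha\delta/\beta^2)$ here) and from the swap condition (compare $1/e$ with $1$).

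For the inner sum over $i$, a direct log-derivative computation locates the maximizer at $i^{*} = a\ell/(a+b)$, producing maximum value $((a+b)/\ell)^{\ell}$. Bounding the sum by the number of summands times this maximum absorbs a polynomial factor in $k$. Substituting, one obtains
\[
  \Lambda_\tau(k) \leq k^{O(1)} \sum_{\ell \leq \tau k} \Bigl(\frac{ek(a+b)}{\ell}\Bigr)^{\ell}.
\]
Viewing $G(\ell) = (ek(a+b)/\ell)^\ell$ as a continuous function, its unconstrained optimum lies at $\ell^{*} = k(a+b)$ with value $e^{k(a+b)}$.

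The case split is driven by comparing $\ell^{*}$ to the constraint $\tau k$: the inequality $\ell^{*} \leq \tau k$ is precisely $a+b \leq \tau$, and unrolling this in terms of $\alpha, \beta, \delta$ produces $\tau \geq 1/(1 + \alpha/\beta - \alpha\delta/\beta^{2})$. In that regime the interior maximum is realized, and collecting the polynomial-in-$k$ prefactor as a single $\tau k$ yields the first branch $\tau k \cdot e^{(\tau\delta/\beta + (1-\tau)\beta/\alpha)k}$. In the complementary regime $G$ is still increasing at the boundary, so the maximum over $[0,\tau k]$ is attained at $\ell = \tau k$ with value $(e(a+b)/\tau)^{\tau k} = (e\delta/\beta + \tfrac{1-\tau}{\tau}e\beta/\alpha)^{\tau k}$, giving the second branch. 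The final swap for $\beta/\alpha - \delta/\beta > 1$ reflects an inversion in the direction of unimodality of $G$: when that quantity exceeds $1$, the coefficient structure of $G$ flips sign in the relevant derivative test, so the "interior" and "boundary" regimes exchange roles in exactly the same way as in Lemma~\ref{lemma:sum-lower-bound}.

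The main technical nuisance will be bookkeeping the polynomial-in-$k$ factors. A naive execution of the above plan yields a $(\tau k)^{2}$ prefactor (one $\tau k$ from bounding the inner sum by its maximum, another from bounding the outer sum by its maximum), while the lemma claims only $\tau k$. Closing this gap requires sharpening one of the two steps, e.g.\ by recognizing that the outer sum $\sum_{\ell \leq \tau k} F(\ell)$ can instead be dominated termwise by the unconstrained Vandermonde-style expression $(1+\delta/\beta)^{\tau k}(1+\beta/\alpha)^{(1-\tau)k}$, which after applying $1+x \leq e^{x}$ gives $e^{(\tau\delta/\beta + (1-\tau)\beta/\alpha)k}$ with no polynomial factor at all in Case~1, absorbing the remaining $\tau k$ purely as slack to cover the boundary case uniformly.
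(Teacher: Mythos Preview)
The paper omits the proof entirely, stating only that it is ``an analog to Lemma~\ref{lemma:sum-lower-bound} \ldots\ using the same techniques.'' Your proposal is exactly that analog: replacing $\binom{n}{k}\ge (n/k)^k$ by $\binom{n}{k}\le (en/k)^k$, tracking the extra $e^\ell$, and redoing the maximization, which correctly accounts for the shift in the threshold from $1/(1+e\alpha/\beta-\alpha\delta/\beta^2)$ to $1/(1+\alpha/\beta-\alpha\delta/\beta^2)$ and in the swap condition from $1/e$ to $1$. Your Vandermonde-plus-$1+x\le e^x$ shortcut for Case~1 is a clean way to avoid the $(\tau k)^2$ prefactor; the paper gives no detail here, so your bookkeeping is at least as careful as what the authors indicate.
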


\begin{corollary}\label{cor:deg-upper-bound}
  Assume that~$\beta/\alpha - \delta/\beta \leq 1$ and fix~$0 < \tau < 1$.
  For every vertex~$x \in \mathcal F_{\tau k}$ it holds that
  \begin{align*}
      \E[\deg_{\leq \tau k}(x)] &\leq k^{\Theta(1)} \left( 2 \alpha (\beta/\alpha)^\tau e^{(\tau\delta/\beta + (1-\tau)\beta/\alpha)} \right)^k
  \intertext{%
  for~$\tau \geq \frac{1}{1 + \frac{\alpha}{\beta} - \frac{\alpha\delta}{\beta^2}}$ and
  }
      \E[\deg_{\leq \tau k}(x)] &\leq k^{\Theta(1)} \left( 2 \alpha (e\delta/\alpha+\frac{1-\tau}{\tau} e\beta^2/\alpha^2)^\tau \right)^k
  \end{align*}
  otherwise. For~$\beta/\alpha - \delta/\beta > 1$ the two bounds swap.
\end{corollary}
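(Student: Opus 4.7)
The plan is to obtain Corollary~\ref{cor:deg-upper-bound} as a direct substitution of Lemma~\ref{lemma:sum-upper-bound} into Equation~\ref{eq:degree}. Since the lemma already provides the two-case upper bound on~$\Lambda_\tau(k)$, the corollary reduces to exponent bookkeeping, with the only nontrivial observation being how the factor $(\beta/\alpha)^{\tau k}$ merges into the base of the inner exponential in the second case.

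First I would rewrite the degree expectation, using Equation~\ref{eq:degree}, as
\begin{align*}
  \E[\deg_{\leq \tau k}(x)] = 2\mu \cdot (2\alpha)^k (\beta/\alpha)^{\tau k} \cdot \Lambda_\tau(k),
\end{align*}
and plug in the first case of Lemma~\ref{lemma:sum-upper-bound}, which for $\tau \geq (1+\alpha/\beta - \alpha\delta/\beta^2)^{-1}$ gives $\Lambda_\tau(k) \leq \tau k \cdot e^{(\tau\delta/\beta + (1-\tau)\beta/\alpha)k}$. Collecting the three $k$-th powers yields
\begin{align*}
  \E[\deg_{\leq \tau k}(x)] \leq (2\mu\tau k) \cdot \left(2\alpha (\beta/\alpha)^{\tau} e^{\tau\delta/\beta + (1-\tau)\beta/\alpha}\right)^k,
\end{align*}
and the prefactor $2\mu\tau k$ is $k^{\Theta(1)}$ under our standing assumption that~$\mu$ does not scale with~$k$.

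For the second case, $\tau < (1+\alpha/\beta - \alpha\delta/\beta^2)^{-1}$, Lemma~\ref{lemma:sum-upper-bound} yields $\Lambda_\tau(k) \leq \tau k \cdot (e\delta/\beta + \tfrac{1-\tau}{\tau} e\beta/\alpha)^{\tau k}$. The key algebraic step is to distribute $\beta/\alpha$, coming from $(\beta/\alpha)^{\tau k}$ in Equation~\ref{eq:degree}, across the two summands inside the $\tau k$-th power:
\begin{align*}
  (\beta/\alpha) \cdot \left(e\delta/\beta + \tfrac{1-\tau}{\tau}\, e\beta/\alpha\right) = e\delta/\alpha + \tfrac{1-\tau}{\tau}\, e\beta^2/\alpha^2.
\end{align*}
After this merger the bound takes the required form $k^{\Theta(1)} \cdot \bigl(2\alpha\,(e\delta/\alpha + \tfrac{1-\tau}{\tau}\, e\beta^2/\alpha^2)^{\tau}\bigr)^k$. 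The case swap when $\beta/\alpha - \delta/\beta > 1$ is inherited from the corresponding swap in Lemma~\ref{lemma:sum-upper-bound}, since only the question of which summand dominates in the derivation of~$\Lambda_\tau(k)$ changes.

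There is no real obstacle at the level of the corollary itself beyond Lemma~\ref{lemma:sum-upper-bound}, which the paper cites as provable by mirroring Lemma~\ref{lemma:sum-lower-bound}: one applies the upper estimate $\binom{n}{k} \leq (en/k)^k$ in place of the lower estimate $(n/k)^k$, uses the crude bound $\sum_{\ell} a_\ell \leq (\tau k + 1) \max_\ell a_\ell$ to pull out the $\tau k$ prefactor, and then optimizes over $\ell$ exactly as before. The extra factors of $e$ introduced by the upper binomial estimate are precisely what shifts the threshold on~$\tau$ by the factor of~$e$ visible when comparing Corollary~\ref{cor:deg-lower-bound} and Corollary~\ref{cor:deg-upper-bound}, and what accounts for the additional $e$'s appearing in the base of the exponential in the second case.
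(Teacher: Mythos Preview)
Your proposal is correct and matches the paper's intended approach: the paper explicitly omits the proof, stating only that it is ``an analog to Lemma~\ref{lemma:sum-lower-bound} and Corollary~\ref{cor:deg-lower-bound}\ldots using the same techniques,'' and your substitution of Lemma~\ref{lemma:sum-upper-bound} into Equation~\eqref{eq:degree} (including the distribution of $(\beta/\alpha)^{\tau k}$ into the base in the second case) is precisely that analog. Your sketch of how Lemma~\ref{lemma:sum-upper-bound} itself is obtained---replacing $(n/k)^k$ by $(en/k)^k$ and bounding the sum by $(\tau k{+}1)$ times its largest term---also tracks the paper's hint that only the constants in the threshold and the extra factors of~$e$ change.
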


\noindent
We will now relate the quantity~$\E[\deg_{\leq \tau k}(\cdot)]$ to the existence of dense
and sparse subgraphs and apply the derived lower and upper bounds to identify parametric
ranges in which these structures are asymptotically unavoidable.

%
%
\subsection{Lower Hamming slices are dense}\label{sec:degen-dense}

\noindent
Let~$G$ be the undirected graph generated by~\SKG with the
parameters~$\twobytwo(\alpha,\beta,\beta,\delta),\mu$ and~$k$. An simple first
observation with respect to the density of \RMAT models
is that for~$\alpha \geq 1/2$, already the Hamming slice~$\mathcal F_1$ is asymptotically 
dense: the expected density of ~$\mathcal F_1$ is
\[
  2m k^{-1} {k \choose 2} \alpha^{k-1}\beta \sim 2\mu k (2\alpha)^k (\beta/\alpha),
\]
which goes to infinity as~$k$ grows and hence produces a dense subgraph for~$\alpha \geq 1/2$.
We want to extend this observation and ask for the range~$\alpha < 1/2$ whether the
lower Hamming slices are asymptotically dense.

By some abuse
of notation, let us write~$\|F_{\leq k}\| := |E(G[\mathcal F_{\leq k}])|$ to
denote the number of edges whose endpoints both have Hamming weight at
most~$k$ in~$G$. We define the density~$D_{\leq \tau k} = \|\mathcal F_{\leq
\tau k}\| / |\mathcal F_{\leq \tau k}|$. Let us first relate this density to
the expected number of neighbors a vertex has in lower Hamming slices.

\begin{figure}[bth]
  \centering
  \hspace*{-6pt}
  \includegraphics[scale=.6]{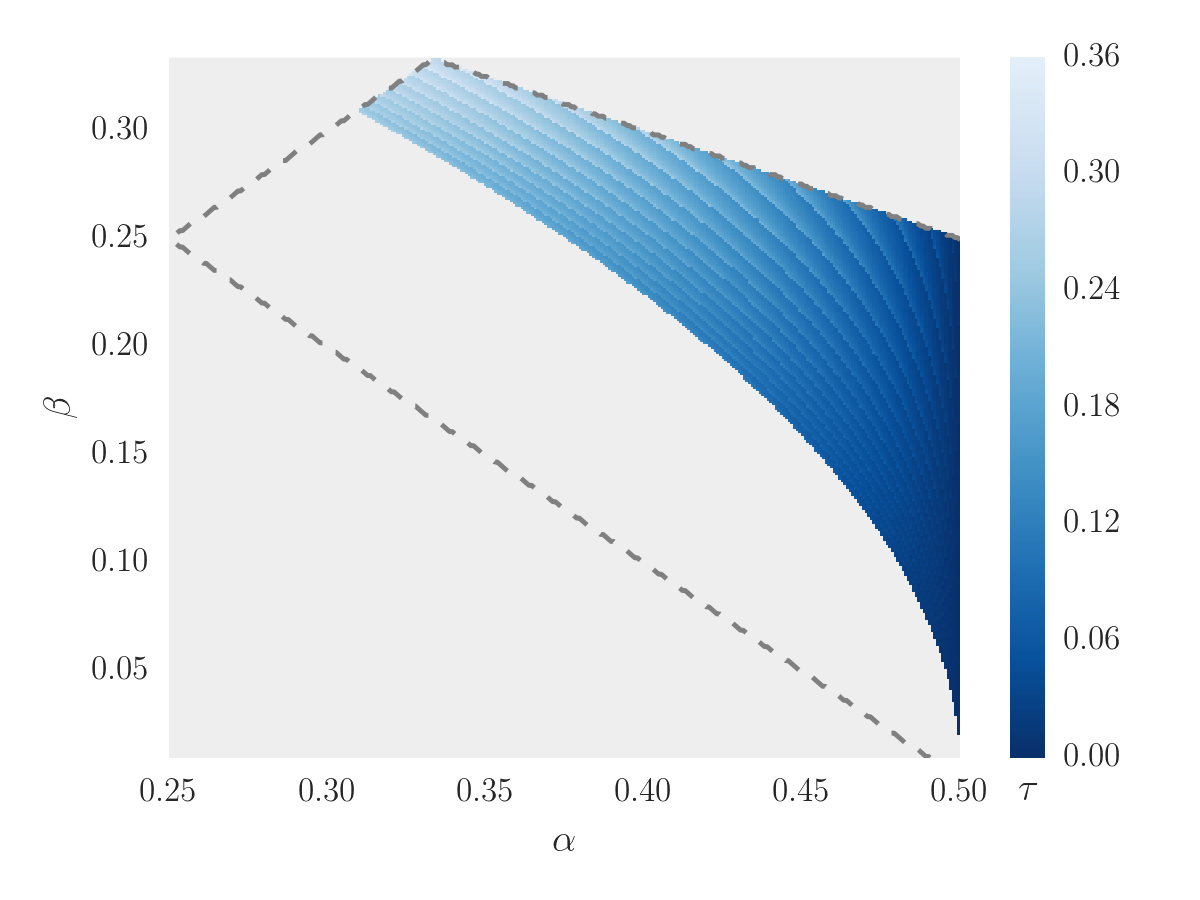}
  \caption{%
    Parametric range in which Kronecker-style models generate dense subgraphs according to the bound
    derived in Corollary~\ref{cor:simple-deg-lower-bound}. The range in which~$\alpha$ is the largest
    entry of the initiator matrix is circumscribed by the dashed outline.
    The shading indicates the smallest~$\tau$ for which the density~$D_{\leq \tau k}$ is above one according
    to the lower bound, gray regions indicate that it lies below one. The banding structure is an artifact
    of using 20 equally distributed values for~$\epsilon$.
    \label{fig:density}
  }
\end{figure}

\begin{lemma}\label{lemma:density-bound}
  For every~$0 < \tau < 1$ it holds that
  \[
    D_{\leq \tau k} \geq k^{\Theta(1)} \E[\deg_{\tau k}(x) \mid x \in \mathcal F_{\tau k}]
  \]
  with high probability.
\end{lemma}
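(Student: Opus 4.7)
The plan is to pass from the degree bound to a density bound by a double-counting argument in expectation, and then to promote the resulting estimate to a high-probability one via Chernoff--Hoeffding.

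First I would establish the deterministic inequality $\|\mathcal{F}_{\leq \tau k}\| \geq \tfrac{1}{2}\sum_{x \in \mathcal{F}_{\tau k}} \deg_{\leq \tau k}(x)$. This follows because every edge counted on the right-hand side lies in $G[\mathcal{F}_{\leq \tau k}]$ and is counted at most twice (exactly twice when both endpoints sit in $\mathcal{F}_{\tau k}$). Since the \SKG edge probabilities depend on vertex labels only through their Hamming weights, $\E[\deg_{\leq \tau k}(x)]$ is identical across $x \in \mathcal{F}_{\tau k}$, so linearity of expectation followed by division by $|\mathcal{F}_{\leq \tau k}|$ yields
\[
\E[D_{\leq \tau k}] \;\geq\; \tfrac{1}{2}\,\frac{|\mathcal{F}_{\tau k}|}{|\mathcal{F}_{\leq \tau k}|}\cdot \E[\deg_{\leq \tau k}(x)\mid x\in\mathcal{F}_{\tau k}].
\]
A Stirling-type estimate (as tabulated in the preliminaries) then bounds the slice ratio: for $\tau\leq 1/2$ the binomials $\binom{k}{\ell}$ increase on $\ell\in[0,\tau k]$, giving $|\mathcal{F}_{\leq \tau k}|\leq (\tau k+1)\binom{k}{\lfloor\tau k\rfloor}$ and hence a ratio of $k^{-\Theta(1)}$; for $\tau > 1/2$ one argues on the complementary slice. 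This polynomial slack is exactly what the advertised $k^{\Theta(1)}$ factor is designed to absorb.

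The remaining step is concentration. Because every potential arc in \SKG is flipped independently, $\|\mathcal{F}_{\leq \tau k}\|$ is a sum of independent Bernoulli indicators, so the Chernoff--Hoeffding bound of the preliminaries (and its standard lower-tail counterpart) gives
\[
\Pr\bigl[\|\mathcal{F}_{\leq \tau k}\| < \tfrac{1}{2}\E[\|\mathcal{F}_{\leq \tau k}\|]\bigr] \;\leq\; e^{-\Omega(\E[\|\mathcal{F}_{\leq \tau k}\|])}.
\]
The only real obstacle is verifying that $\E[\|\mathcal{F}_{\leq \tau k}\|]$ is super-logarithmic in $n=2^k$, which is automatic in every regime in which the lemma has content: the lower bounds of Corollaries~\ref{cor:deg-lower-bound} and~\ref{cor:simple-deg-lower-bound} guarantee such growth on $\E[\deg_{\leq \tau k}(x)]$, and outside that regime both sides of the claimed inequality are vanishingly small so there is nothing to prove.
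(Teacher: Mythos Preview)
Your approach is essentially the paper's: lower-bound $\|\mathcal F_{\leq \tau k}\|$ by the contribution of the top slice~$\mathcal F_{\tau k}$, reduce the ratio $|\mathcal F_{\tau k}|/|\mathcal F_{\leq \tau k}|$ to $k^{-\Theta(1)}$ via Stirling, and then invoke concentration (where you are in fact more explicit than the paper, which simply writes ``concentration arguments''). One small correction to your justification: the \SKG edge probability between $x$ and $y$ depends on the full joint bit pattern $(\#00,\#01,\#10,\#11)$, not merely on the two Hamming weights; the reason $\E[\deg_{\leq \tau k}(x)]$ is constant over $x\in\mathcal F_{\tau k}$ is rather that the model is invariant under permutations of bit positions.
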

\begin{proof}
  We have that
  \begin{align*}
     D_{\leq \tau k} &= \frac{\|\mathcal F_{\leq \tau k}\|}{|\mathcal F_{\leq \tau k}|}
    \geq \frac{|\mathcal F_{\tau k}|}{|\mathcal F_{\leq \tau k}|} \E[\deg_{\leq \tau k}(x) \mid x \in \mathcal F_{\tau k}] \\
    &= {k \choose \tau k} \Big(  \sum_{i \leq \tau k} {k \choose i} \Big)^{-1} \E[\deg_{\leq \tau k}(x) \mid x \in \mathcal F_{\tau k}] \\
    &= k^{\Theta(1)} \E[\deg_{\leq \tau k}(x) \mid x \in \mathcal F_{\tau k}],
  \end{align*}
  and the claim follows from concentration arguments.
\end{proof}

\noindent
It follows that~$\E[\deg_{\tau k}(\cdot)]$ is crucial for the density of graph
generated by Kronecker-style models: if there exists a~$0 < \tau < 1$ such
that this quantity diverges, then the generated graph contains a dense
subgraph with high probability. In particular, such graphs have unbounded degeneracy.
We used the family of lower bounds derived in 
Corollary~\ref{cor:simple-deg-lower-bound} in order to map out a parametric 
region that is guaranteed to generate asymptotically dense graphs, \cf
Figure~\ref{fig:density}. As observed above, for~$\alpha \geq 1/2$ the graphs
are necessarily dense and the plot nicely exhibits this trend of very small
dense subgraphs as~$\alpha$ tends towards~$1/2$. Note that in the whole range,
the lower bounds for dense subgraphs predict a density that grows
like~$k^{\Theta(1)} c^k$ with~$c$ typically around~$1.2$.
For such moderately exponential functions it is
unsurprising that experimental approaches have failed to identify dense subgraphs:
for typical ranges of~$k$, the polynomial terms easily dominate.
Of the undirected graphs listed
in Table~\ref{tab:rmat-converted}, we find that the fitted parameters for
\netw{AS-Routeviews}, \netw{Bio-Proteins}, and \netw{AS-Newman} fall into
a regime which generates asymptotically dense subgraphs.

%
\subsection{Higher Hamming slices are degenerate}\label{sec:degen-sparse}

\noindent
We saw in the previous section that Kronecker-style models often generate asymptotically
dense graphs. Our proof located this density in the lower Hamming slices and the
question whether the \emph{higher} slices are sparse arises naturally. Here we show
not only that the higher slices are often sparse, we show that they exhibit a sparse
\emph{structure}: if we iteratively remove vertices of low degree, this process will
remove all vertices in~$\mathcal F_{\geq \tau k}$, for some fixed~$\tau$ depending
on the input parameters. We can rephrase this idea in terms of the core-structure of the
generated graph as follows:

\def\stau{{\bar \tau}}
\begin{lemma}\label{lemma:degeneracy}
  Fix parameters~$\twobytwo(\alpha,\beta,\beta,\delta), \mu$. Let~$\stau$ be such that for every~$\tau \geq \stau$,
  we have~$\E[\deg_{\leq \tau k}(x) \mid x \in \mathcal F_{\tau k}] \leq 2^{-\Theta(k)}$. Then there exists~$c \in \R$
  such that the $c$-core of the resulting graphs lies in~$\mathcal F_{< \stau k}$ with high probability.
\end{lemma}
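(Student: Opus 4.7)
My plan is to use the elimination-ordering characterization of the $c$-core: if a set $S \subseteq V(G)$ admits a linear order $v_1, v_2, \ldots$ such that every $v_i$ has fewer than $c$ neighbors in $(V(G)\setminus S)\cup\{v_1,\ldots,v_{i-1}\}$, then the standard peeling process removes all of $S$ before the $c$-core stabilizes, and hence $S$ is disjoint from the $c$-core. I take $S = \mathcal F_{\geq \stau k}$ and order it by strictly decreasing Hamming weight, breaking ties arbitrarily within a slice. When a vertex $v \in \mathcal F_\ell$ with $\ell \geq \stau k$ is about to be peeled, all vertices of strictly greater weight have already been removed, so the not-yet-removed neighbors of $v$ lie in $\mathcal F_{\leq \ell}$ and their count is bounded by $\deg_{\leq \ell}(v)$. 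It therefore suffices to show that, with high probability, every $v \in \mathcal F_\ell$ with $\ell \geq \stau k$ satisfies $\deg_{\leq \ell}(v) < c$ for a suitably large constant~$c$.

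By the slice-symmetry of \SKG, the hypothesis translates to $\xi := \E[\deg_{\leq \ell}(v)] \leq 2^{-\epsilon k}$ for some fixed $\epsilon > 0$ uniform over all $\ell \geq \stau k$ and $v \in \mathcal F_\ell$. Because edges in \SKG are sampled independently, $\deg_{\leq \ell}(v)$ is a sum of independent Bernoulli variables, and the Chernoff--Hoeffding reformulation from the preliminaries yields
\[
  \Pr[\deg_{\leq \ell}(v) \geq c] \;\leq\; \Big(\tfrac{e\xi}{c}\Big)^{c} \;\leq\; \Big(\tfrac{e}{c}\Big)^{c} 2^{-\epsilon c k}.
\]
Fixing an integer $c$ with $\epsilon c > 1$ (a constant in~$k$) and union-bounding over the at-most-$2^k$ vertices of $\mathcal F_{\geq \stau k}$, the total failure probability is bounded by $(e/c)^{c}\cdot 2^{(1-\epsilon c)k} = o(1)$. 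On the complementary event, the elimination order strips every vertex of $\mathcal F_{\geq \stau k}$ from the graph before the $c$-core stabilizes, proving the lemma.

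The main obstacle is precisely the transfer from an \emph{average} bound per slice to a \emph{uniform} bound across exponentially many high-slice vertices: the decay rate~$\epsilon$ supplied by the hypothesis could a priori be very small, so a naive Markov argument would bound the expected number of ``bad'' vertices by only $2^k \cdot 2^{-\epsilon k}$, which is useless when $\epsilon < 1$. The Chernoff step amplifies the rate from $\epsilon$ to $\epsilon c$, and the freedom to enlarge~$c$ by a constant factor — which is exactly the content of ``there exists $c \in \R$'' in the statement — is what closes the union bound. Everything else (the elimination-ordering equivalence, the slice-symmetry of \SKG, and the Chernoff reformulation) is already in place from the preliminaries.
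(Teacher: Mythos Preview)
Your proof is correct and follows essentially the same route as the paper's: apply the Chernoff--Hoeffding reformulation to $\deg_{\leq \tau k}(x)$, then choose the constant~$c$ large enough that the tail bound $2^{-\epsilon c k}$ beats the $2^{k}$ vertices in the union bound. The paper writes this with threshold~$c\mu$ and phrases the last step via Markov's inequality on the expected count of bad vertices, but the arithmetic is the same; if anything, your version is more explicit than the paper in spelling out the elimination-ordering argument that connects ``every high-slice vertex has fewer than~$c$ neighbors in $\mathcal F_{\leq \ell}$'' to ``the $c$-core is contained in~$\mathcal F_{<\stau k}$''.
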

\begin{proof}
  Assume~$\tau \geq \stau$ and let~$x \in \mathcal F_{\tau k}$. We apply the
  multiplicative Chernoff--Hoeffding bound and obtain that
  \[
    \P[\deg_{\leq \tau k}(x) \geq c\mu] \leq \Big( \frac{e}{c\mu} \E[\deg_{\leq \tau k}(x)] \Big)^{c\mu} \!\!.
  \]
  Accordingly, the expected number of vertices in~$\mathcal F_{\tau k}$ with more than~$c\mu$ neighbors
  in~$\mathcal F_{\leq \tau k}$ is at most
  \[
    |\mathcal F_{\tau k}| \Big( \frac{e}{c\mu} \E[\deg_{\leq \tau k}(x)] \Big)^{c\mu}
    \! \leq 2^k \Big( \frac{e}{c\mu} \E[\deg_{\leq \tau k}(x)] \Big)^{c\mu} \!\!.
  \]
  Since~$\E[\deg_{\leq \tau k}(x)] = 2^{-\Theta(k)}$, we can choose~$c$ high enough such that
  this expected value is upper-bounded by~$2^{-k}$ and the claim follows from Markov's inequality.
\end{proof}

\noindent
Using the upper bounds established in Corollary~\ref{cor:deg-upper-bound} we can map out what 
fraction of the higher Hamming slices can be expected to form only sparse connections with higher
slices. Figure~\ref{fig:degeneracy} demonstrates how generator matrices in which~$\beta$ is (roughly) at
least as large as~$\alpha/2$ will result in graphs in which a significant fraction of the vertices
fall outside the denser cores. Concerning the networks listed in Table~\ref{tab:rmat-converted}, we find
that the initiator matrices of \netw{AS-Routeviews} and \netw{AS-Newman} will generate graphs in which the vertices
of~$\mathcal F_{\geq .49k}$ have small degree into higher Hamming slices;
for \netw{Bio-Proteins} the slices~$\mathcal F_{\geq .61k}$, and for \netw{Email-Inside} 
the slices~$\mathcal F_{\geq .65k}$.

\begin{figure}[t]
  \centering
  \hspace*{-6pt}
  \includegraphics[scale=.6]{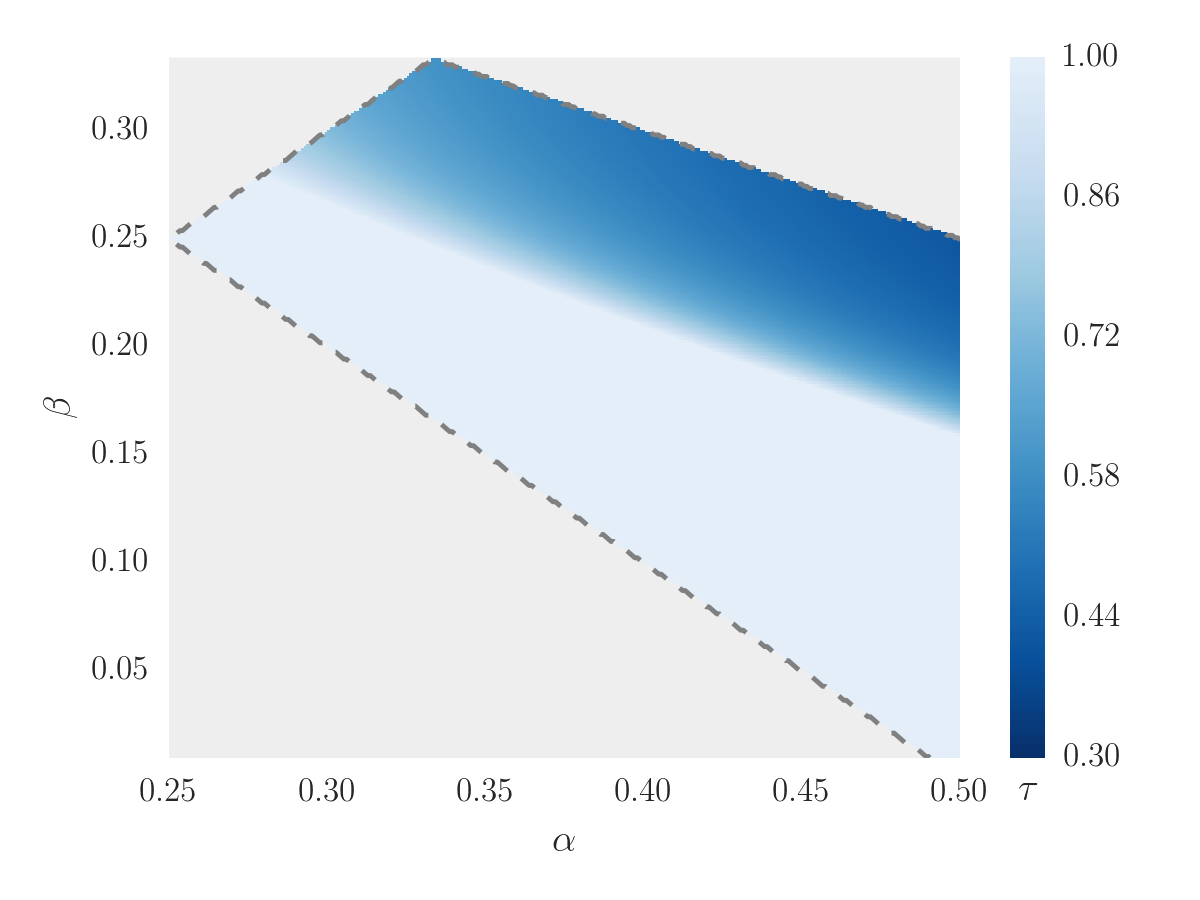}
  \caption{%
    Fraction of Hamming slices~$\mathcal F_{\geq \tau k}$ that fall outside any $c$-core
    in the sense of Lemma~\ref{lemma:degeneracy} in dependence on the input parameters, determined
    via the upper bounds provided by Corollary~\ref{cor:deg-upper-bound}.
    The light area contains those graphs whose core structure does not correlate with the ordering
    by Hamming weight, in particular \Erdos--\Renyi graphs at~$\alpha = \beta = \delta = 0.25$.
    \label{fig:degeneracy}
  }
\end{figure}

\section{Conclusions}
\noindent
In this paper we obtain two asymptotic results pertaining to the structure of
Kronecker-style graph models: (1) a characterization of the conditions under
which variants of this model family are asymptotically equivalent, and (2)
some members of this family can produce deep core structures and dense
subgraphs which are atypical for real-world networks. The latter had not been
detected by empirical methods and our asymptotic bounds provide a putative
explanation for this fact: the scales at which dense subgraphs and deep cores
become apparent lie beyond the usual experimental settings. 

In the other extreme,
our analysis of the arc probabilities in \SKG and \RMAT led to a
statistic that can reveal the difference between these models already at small
graph sizes. This calls into question the common approach of treating these
models as interchangeable and demands further study.

We conclude that the type of analysis presented here has the capacity to further
our understanding of network models beyond currently observable statistics.

\bibliographystyle{IEEEtran}
\bibliography{sections/bib}


\end{document}